\documentclass[sigconf]{acmart}

\usepackage{graphicx}
\usepackage{amsmath,amsfonts}
\usepackage{mathtools}
\usepackage{bm}
\usepackage{subcaption}
\usepackage{booktabs}
\usepackage{multirow}
\usepackage{algorithm}
\usepackage{algpseudocode} 
\usepackage{tikz}
\usepackage{pgfplots}
\usepackage{xcolor}
\usepackage{balance}
\usepackage{placeins}
\usepackage{float}
\usepackage{enumerate}

\usepackage{caption}

\theoremstyle{plain}
\newtheorem{theorem}{Theorem}[section]
 
\newtheorem{proposition}[theorem]{Proposition}

\theoremstyle{definition}
\newtheorem{definition}[theorem]{Definition}
\newtheorem{example}{Example}
\theoremstyle{remark}
\newtheorem{remark}[theorem]{Remark}
\newtheorem{lemma}[theorem]{Lemma}

\begin{document}
\copyrightyear{2026}
\acmYear{2026}
\setcopyright{cc}
\setcctype{by-nc-nd}
\acmConference[HSCC '26]{29th ACM International Conference on Hybrid Systems: Computation and Control}{May 11--14, 2026}{Saint Malo, France}
\acmBooktitle{29th ACM International Conference on Hybrid Systems: Computation and Control (HSCC '26), May 11--14, 2026, Saint Malo, France}
\acmDOI{10.1145/3801146.3805668}
\acmISBN{979-8-4007-2566-1/2026/05}

\author{Peng Xie}
\affiliation{%
  \department{Department of Computer Engineering}
  \institution{TUM School of Computation, Information and Technology, Technical University of Munich}
  \city{Heilbronn}
  \country{Germany}}
\email{p.xie@tum.de}

\author{Amr Alanwar}
\affiliation{%
  \department{Department of Computer Engineering}
  \institution{TUM School of Computation, Information and Technology, Technical University of Munich}
  \city{Heilbronn}
  \country{Germany}}
\email{alanwar@tum.de}

\begin{abstract}
Data-driven reachability analysis using matrix zonotopes faces a fundamental challenge: the number of generators in the reachable set grows exponentially during propagation, while current order reduction yields overly conservative approximations in data-driven settings. This paper introduces an orthogonal matrix-based framework that appropriately transfers the coordinate system before reducing the generators of the reachable set, dramatically reducing reachable set volumes. By exploiting the factorized structure of data-driven matrix zonotope generators, we develop several efficient algorithms to solve the problem. Numerical experiments demonstrate order-of-magnitude volume reductions compared to traditional methods, while maintaining comparable generator numbers. Our method provides a practical solution to improve precision in data-driven safety verification.
\end{abstract}

\title{Orthogonal Transformations for Efficient Data-Driven Reachability Analysis}

\maketitle

\section{Introduction}

Data-driven reachability analysis verifies safety properties when precise system models are unavailable. Matrix zonotopes provide a computationally tractable representation for capturing model uncertainty from noisy trajectory data~\cite{alanwar2023data}: Minkowski sums and linear transformations admit closed-form expressions, enabling efficient set propagation~\cite{girard2005reachability,althoff2010reachability,stinson2016randomized} and widespread adoption in verification tools such as CORA~\cite{althoff2015cora}. Since Girard~\cite{girard2005reachability} established zonotope-based reachability for uncertain linear systems, the framework has been extended to hybrid systems via constrained zonotopes~\cite{scott2016constrained,xie2025data,althoff2007reachability}, mixed logical dynamical systems via hybrid zonotopes~\cite{bird2023hybrid}, and nonlinear dynamics via polynomial zonotopes~\cite{althoff2013reachability}. In the data-driven setting, Alanwar et al.~\cite{alanwar2023data} introduced matrix zonotopes to encapsulate parametric uncertainty, with recent refinements through constrained matrix zonotopes~\cite{alanwar2023robust}.

Despite these theoretical advances, a fundamental computational challenge remains. While the generator count increases linearly with time in model-based settings~\cite{girard2005reachability}, data-driven approaches face far more severe growth. Matrix zonotopes inherently require numerous generators to capture model uncertainty, and these generators compound during propagation. Furthermore, computing exact zonotope volumes requires evaluating determinants over all $n$-subsets of generators---a combinatorial problem that becomes intractable for moderate dimensions~\cite{gover2010determinants,esterov2010zonotopes,guibas2003zonotopes,emiris2014efficient}.

This computational burden makes order reduction essential. Girard~\cite{girard2005reachability} aggregates smaller generators into axis-aligned interval hulls, maintaining bounded generator counts with outer-approximation guarantees. Subsequent refinements include PCA-based dominant-direction methods~\cite{yang2011methods}, bounding-box quality analysis~\cite{dimitrov2000bounds}, rotational invariance~\cite{ding2019rotational}, SVD-oriented bounding boxes for hybrid systems~\cite{stursberg2003efficient}, and recent comparative studies~\cite{kopetzki2017methods,chen2024novel}. In contrast to~\cite{stursberg2003efficient}, which orients the hull representation itself, the present work applies an orthogonal coordinate transformation before Girard reduction to a \emph{data-driven matrix zonotope}, exploiting its rank-one generator structure arising from the noise model. The goal is not a generic reduction scheme for arbitrary zonotopes, but to leverage the matrix structure specific to data-driven reachability for tighter outer approximations.

Without order reduction, exponential generator growth makes long-horizon analysis prohibitive~\cite{girard2005reachability}; yet standard reduction in the original coordinates produces overly conservative volumes~\cite{alanwar2023data}. This trade-off has limited the practical applicability of data-driven reachability for safety-critical systems.

\begin{figure*}[!t]
\centering
\includegraphics[width=0.95\textwidth]{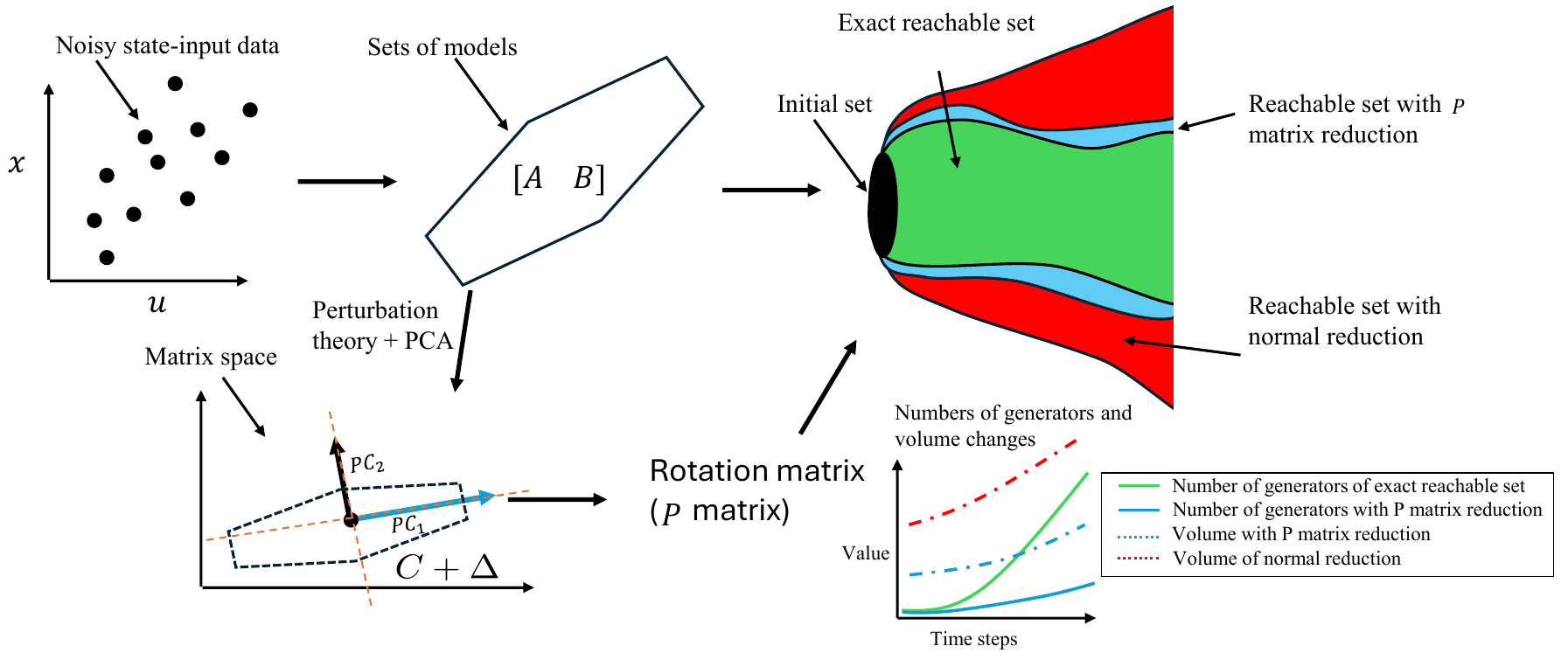}
\caption{Overview of the proposed approach. Noisy data yield a matrix zonotope of feasible dynamics. An orthogonal matrix~$P$, selected via perturbation theory and PCA, aligns coordinates with dominant directions, producing tighter reachable sets (blue) than standard reduction (red). The exact set (green) is the ground truth.}
\label{fig:main_workflow}
\end{figure*}

The proposed solution, illustrated in Figure~\ref{fig:main_workflow}, leverages a key insight: the effectiveness of order reduction depends fundamentally on the coordinate system. By choosing an orthogonal matrix~$P$ that aligns coordinates with dominant dynamical directions, the Girard reduction~\cite{girard2005reachability} operates more efficiently, minimizing the wrapping effect. As the figure shows, projection-based reduction (blue) maintains volumes orders of magnitude smaller than standard reduction (red) at comparable generator counts.

Projection selection is formulated as an optimization on orthogonal manifolds, drawing on classical perturbation theory (Davis--Kahan~\cite{davis1970rotation}, Wedin~\cite{wedin1972perturbation}, Stewart--Sun~\cite{stewart1990matrix}) to quantify how noise affects invariant subspaces~\cite{knyazev2002principal}, and on manifold optimization~\cite{edelman1998geometry,absil2009optimization} via Givens-parameterized coordinate descent~\cite{shalit2014coordinate} and Manopt solvers~\cite{boumal2014manopt,gao2021riemannian}.

This paper makes three contributions. First, we analyze reachable set propagation through matrix zonotopes, establishing connections between perturbations, orientation selection, and volume after Girard reduction. Second, we develop a projection optimization framework that minimizes reduced zonotope volumes under a \emph{fixed generator budget}; the contribution is improved tightness per reduction step (mitigating the wrapping effect), not a change in the intrinsic generator-growth rate. Third, we present numerical evaluations demonstrating substantial volume reductions at comparable generator counts; the full code will be made available upon acceptance.

The remainder of the paper is organized as follows. Section~\ref{sec-prelim} reviews set representations and perturbation theory. Section~\ref{Invariance} develops the orthogonal transformation framework and its volume analysis. Section~\ref{p_matrix_selection} presents algorithms for selecting the orthogonal matrix. Section~\ref{sec:numerical} provides numerical evaluations, and Section~\ref{conclu} concludes.


\section{PRELIMINARIES AND PROBLEM STATEMENT}\label{sec-prelim}

\subsection{Notation}
Matrices are denoted by uppercase Roman letters (e.g., $A \in \mathbb{R}^{n \times n}$, $P \in \mathbb{R}^{n \times r}$), whereas calligraphic capitals represent sets (e.g., $\mathcal{R}_k \subset \mathbb{R}^n$, $\mathcal{M}_{\Sigma}$). Vectors and scalars appear as lowercase Roman letters (e.g., $x_k \in \mathbb{R}^n$, $\alpha_i \in [-1,1]$). The symbols $\mathbf{1}_n$ and $\mathbf{0}_n$ denote the all-ones and all-zeros vectors in $\mathbb{R}^n$, respectively, while $I_n$ denotes the $n \times n$ identity matrix. The transpose is indicated by $^{\top}$. For a matrix $A$, the $(i,j)$-th entry is denoted $A_{(i,j)}$ and the $j$-th column is $A_{(:,j)}$. $\mathcal{O}(\cdot)$ denotes big-O notation: $a_k = \mathcal{O}(b_k)$ means $a_k$ is bounded by a constant multiple of $b_k$ for all sufficiently large $k$.

The spectral norm of a matrix $M$ is denoted $\|M\|_2$, and its Frobenius norm is $\|M\|_F$. The singular values of $M \in \mathbb{R}^{m \times n}$ are denoted $\sigma_1(M) \geq \sigma_2(M) \geq \cdots \geq \sigma_{\min(m,n)}(M)$, with $\sigma_{\min}(M)$ representing the smallest singular value. For symmetric matrices $A$, eigenvalues are denoted $\lambda_1(A) \geq \lambda_2(A) \geq \cdots \geq \lambda_n(A)$. The notation $\sin\Theta(U_1, U_2)$ represents the matrix of canonical angles between the subspaces spanned by the columns of $U_1$ and $U_2$. The Minkowski sum of sets $\mathcal{A}$ and $\mathcal{B}$ is denoted $\mathcal{A} \oplus \mathcal{B} = \{a + b : a \in \mathcal{A}, b \in \mathcal{B}\}$. The Cartesian product is denoted $\mathcal{A} \times \mathcal{B}$. $\mathcal{A} \cap \mathcal{B}$ denotes their set-theoretic intersection. $\mathrm{vol}(\cdot)$ denotes the Lebesgue volume of a set. $\operatorname{GR}(\cdot)$ denotes the Girard zonotope order-reduction operator in~\cite{girard2005reachability}.


\subsection{Set Representations}
The reachability pipeline relies on the following set representations.

\begin{definition}[Zonotope~\cite{girard2005reachability}]
$\mathcal{Z} = \langle c, G \rangle = \{c + G\xi : \xi \in [-1,1]^p\}$, where $c \in \mathbb{R}^n$ and $G \in \mathbb{R}^{n \times p}$.
\end{definition}

\begin{definition}[Matrix zonotope {\cite[p.~54]{althoff2010reachability}}]
\label{def:matrix_zonotope}
Given $C, G^{(i)} \in \mathbb{R}^{n \times m}$,
\begin{equation}
\mathcal{M} = \bigl\langle C, \{G^{(i)}\}_{i=1}^{\gamma} \bigr\rangle
= \Bigl\{C + \textstyle\sum_{i=1}^{\gamma} \beta_i G^{(i)} : \beta_i \in [-1,1]\Bigr\}.
\label{eq:matrix_zonotope}
\end{equation}
\end{definition}

\begin{definition}[Constrained Zonotope \cite{scott2016constrained}]
\label{def:constrained_zonotope}
Given $G \in \mathbb{R}^{n \times n_g}$, $A \in \mathbb{R}^{n_c \times n_g}$, $b \in \mathbb{R}^{n_c}$,
\[
\mathcal{CZ} = \langle G, c, A, b \rangle = \{c + G\xi : \xi \in [-1,1]^{n_g},\, A\xi = b\}.
\]
Constrained zonotopes arise in the intersection-based tightening of Lemma~\ref{thm:intersection_refinement}.
\end{definition}

\begin{proposition}[Operations on Constrained Zonotopes \cite{scott2016constrained}]
\label{prop:constrained_operations}
For constrained zonotopes $\mathcal{CZ} = \langle G_z, c_z, A_z, b_z \rangle \subset \mathbb{R}^n$, $\mathcal{W} = \langle G_w, c_w, A_w,$ $ b_w \rangle \subset \mathbb{R}^n$, $\mathcal{Y} = \langle G_y, c_y, A_y, b_y \rangle \subset \mathbb{R}^k$, and matrix $R \in \mathbb{R}^{k \times n}$, the following identities hold:
\begin{align}
R\mathcal{CZ} &= \langle RG_z, Rc_z, A_z, b_z \rangle, \\
\mathcal{CZ} + \mathcal{W} &= \left\langle [G_z \; G_w], c_z + c_w, \begin{bmatrix} A_z & 0 \\ 0 & A_w \end{bmatrix}, \begin{bmatrix} b_z \\ b_w \end{bmatrix} \right\rangle, \\
\mathcal{CZ} \cap_R \mathcal{Y} &= \left\langle [G_z \; 0], c_z, \begin{bmatrix} A_z & 0 \\ 0 & A_y \\ RG_z & -G_y \end{bmatrix}, \begin{bmatrix} b_z \\ b_y \\ c_y - Rc_z \end{bmatrix} \right\rangle.
\end{align}
\end{proposition}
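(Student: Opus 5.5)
The plan is to prove each identity by double inclusion, working directly from the set-builder form in Definition~\ref{def:constrained_zonotope}. Throughout, I write $\mathcal{CZ} = \{c_z + G_z\xi_z : \xi_z\in[-1,1]^{n_z},\, A_z\xi_z=b_z\}$, and similarly for $\mathcal{W}$ and $\mathcal{Y}$. For each right-hand side I exhibit an explicit map between the feasible factor vectors that carries points to points.

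\emph{Linear map.} Every point of $R\mathcal{CZ}$ has the form $R(c_z+G_z\xi_z) = Rc_z + (RG_z)\xi_z$, where $\xi_z$ ranges over exactly the same feasible set $\{\xi_z\in[-1,1]^{n_z}: A_z\xi_z=b_z\}$. Hence both sides are the image of the same factor set under the same affine map, so they coincide. No inclusion argument is really needed here.

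\emph{Minkowski sum.} Take $z=c_z+G_z\xi_z\in\mathcal{CZ}$ and $w=c_w+G_w\xi_w\in\mathcal{W}$, and stack $\xi=[\xi_z;\xi_w]$. Then $\xi\in[-1,1]^{n_z+n_w}$, and the block-diagonal constraint $\mathrm{diag}(A_z,A_w)\xi=[b_z;b_w]$ is equivalent to the two separate constraints $A_z\xi_z=b_z$ and $A_w\xi_w=b_w$. Moreover $z+w=(c_z+c_w)+[G_z\;G_w]\xi$. Reading this argument backwards gives the reverse inclusion: split any feasible $\xi$ into its two blocks. The key fact is that the box $[-1,1]^{n_z+n_w}$ is the Cartesian product of the two boxes, and the block-diagonal constraint decouples.

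\emph{Generalized intersection.} Here $\mathcal{CZ}\cap_R\mathcal{Y}:=\{z\in\mathcal{CZ}: Rz\in\mathcal{Y}\}$. I would first state this definition explicitly, since it is the only point needing care; when $k=n$ and $R=I$ it reduces to the ordinary intersection.

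For the forward inclusion, take $z=c_z+G_z\xi_z$ with $Rz=c_y+G_y\xi_y$, where $\xi_z$ and $\xi_y$ are feasible. Set $\xi=[\xi_z;\xi_y]$. The third block row of the constraint reads $RG_z\xi_z-G_y\xi_y=c_y-Rc_z$, which is exactly a rearrangement of $Rz=c_y+G_y\xi_y$. The point itself is $c_z+[G_z\;0]\xi=z$.

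Conversely, any feasible $\xi=[\xi_z;\xi_y]$ satisfying all three block rows gives $z=c_z+G_z\xi_z\in\mathcal{CZ}$ by the first row. The second and third rows show that $Rz=c_y+G_y\xi_y\in\mathcal{Y}$.

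The main (mild) obstacle is bookkeeping. The zero block in $[G_z\;0]$ ensures that the $\xi_y$ factors influence the point only through the constraint, not through the point itself. The sign convention in the third row must also match $c_y-Rc_z$.
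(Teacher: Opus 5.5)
Your double-inclusion argument is correct and is the standard direct verification from the set-builder definition: stack the factor vectors and read off the block constraints, which is also how Scott et al.\ establish these identities (the paper itself gives no proof and only cites them). Stating the definition $\mathcal{CZ}\cap_R\mathcal{Y}=\{z\in\mathcal{CZ}: Rz\in\mathcal{Y}\}$ explicitly, as you do, is a worthwhile addition, since the paper never defines $\cap_R$.
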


\begin{proposition}[{Multiplication between a Matrix Zonotope and a Constrained Zonotope in~\cite[Proposition~2]{alanwar2023data}}]
\label{prop:matzonotope_conzonotope}
Given a matrix zonotope, $\mathcal{M} = \langle C, \{G^{(1)},$ $ \ldots, G^{(\gamma)}\} \rangle$ with $C \in \mathbb{R}^{n \times m}$, $G^{(i)} \in \mathbb{R}^{n \times m}$, and a constrained zonotope $\mathcal{CZ} = \langle G_z, c_z, A_z, b_z \rangle$ with $G_z \in \mathbb{R}^{m \times n_g}$, $c_z \in \mathbb{R}^m$, the set
\[
\mathcal{M} \times \mathcal{CZ} \subseteq \bar{\mathcal{CZ}}
= \langle \bar{c}, \bar{G}, \bar{A}, \bar{b} \rangle
\]

with
\begin{align}
\bar{c} &= C c_z, \\
\bar{G} &= \bigl[\, C G_z,\; G^{(1)} c_z,\dots, G^{(\gamma)} c_z,\; G^{(1)} G_z,\dots, G^{(\gamma)} G_z \,\bigr], \\
\bar{A} &= \mathrm{blkdiag}\Bigl( A_z,\; I_{\gamma},\; \underbrace{A_z,\dots,A_z}_{\gamma} \Bigr), \\
\bar{b} &= \bigl[\, b_z^{\top},\; 0^{\top},\; \underbrace{b_z^{\top},\dots,b_z^{\top}}_{\gamma} \bigr]^{\top}.
\end{align}
\end{proposition}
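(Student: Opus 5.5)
The plan is a direct element-wise verification. I take an arbitrary point of $\mathcal{M}\times\mathcal{CZ}$, write it in the generator form of $\bar{\mathcal{CZ}}$, and exhibit a coefficient vector that lies in the unit box and satisfies $\bar A\bar\xi=\bar b$.

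Fix $M=C+\sum_{i=1}^{\gamma}\beta_i G^{(i)}\in\mathcal{M}$ with $\beta_i\in[-1,1]$. Fix $z=c_z+G_z\xi\in\mathcal{CZ}$ with $\xi\in[-1,1]^{n_g}$ and $A_z\xi=b_z$. Expanding the bilinear product gives
\[
Mz = Cc_z + CG_z\xi + \sum_{i=1}^{\gamma}\beta_i G^{(i)}c_z + \sum_{i=1}^{\gamma} G^{(i)}G_z(\beta_i\xi).
\]
Comparing with $\bar c=Cc_z$ and the block order of $\bar G$, the natural candidate is
\[
\bar\xi = \bigl[\,\xi^{\top},\;\beta^{\top},\;(\beta_1\xi)^{\top},\dots,(\beta_\gamma\xi)^{\top}\bigr]^{\top},
\]
so that $\bar c+\bar G\bar\xi=Mz$ holds identically.

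Next I check the box condition $\bar\xi\in[-1,1]^{n_g+\gamma+\gamma n_g}$. This is immediate, since $|\beta_i|\le1$ and $\|\xi\|_\infty\le1$ give $|\beta_i\xi_j|\le1$. This step is the usual reason the product of two zonotopes is only \emph{over}-approximated: the bilinear terms $\beta_i\xi$ are relaxed to independent box variables.

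Finally I check the constraints row block by row block against the block-diagonal $\bar A$.
\begin{itemize}
\item The first block, $A_z\xi=b_z$, holds by the choice of $z$.
\item The middle block, $I_\gamma\beta=0$, and the product blocks, $A_z(\beta_i\xi)=b_z$, must be verified against $\bar b$ as written.
\end{itemize}
Linearity gives $A_z(\beta_i\xi)=\beta_i b_z$ directly. I will therefore follow the convention of \cite[Proposition~2]{alanwar2023data} for how these rows are to be read and check that the chosen $\bar\xi$ satisfies them in that convention.

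I expect this constraint-matching step to be the main obstacle; the expansion and the box bound are routine. My first attempt will be to satisfy the product-block rows literally by adjusting each $\beta_i\xi$ within the affine set $\{\eta\in[-1,1]^{n_g}: A_z\eta=b_z\}$ while preserving $G_z\eta$. If that fails, I will invoke the cited source's interpretation of the $I_\gamma$ and $b_z$ blocks.

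With the coefficient representation, box membership and constraints in place, every $Mz$ lies in $\bar{\mathcal{CZ}}$, which is the claimed inclusion.
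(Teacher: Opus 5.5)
Your expansion of $Mz$, the coefficient vector $\bar\xi=[\xi^{\top},\beta^{\top},(\beta_1\xi)^{\top},\dots,(\beta_\gamma\xi)^{\top}]^{\top}$ and the box bound are correct. The step you flag as the obstacle, however, cannot be completed. Appealing to ``the convention of the cited source'' is not a proof step; note also that the paper gives no proof of its own and simply imports the result.

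Read literally, $\bar A\bar\xi=\bar b$ imposes $A_z\xi=b_z$, $\beta=0$, and $A_z\eta_i=b_z$ for every product block $\eta_i$. With these rows the inclusion is false.

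\emph{Middle block.} Take $n=m=n_g=\gamma=1$, $C=0$, $G^{(1)}=1$, $c_z=G_z=1$, $A_z=0$, $b_z=0$. Then $\mathcal{CZ}=[0,2]$ and $2=1\cdot 2\in\mathcal{M}\times\mathcal{CZ}$. But $\bar c=0$, $\bar G=[0\;\;1\;\;1]$, and the row $\beta=0$ give $\bar{\mathcal{CZ}}=[-1,1]$.

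\emph{Product blocks.} Take $C=0$, $G^{(1)}=1$, $c_z=0$, $G_z=A_z=1$, $b_z=\tfrac12$. Then $\mathcal{CZ}=\{\tfrac12\}$ and $-\tfrac12\in\mathcal{M}\times\mathcal{CZ}$. But $\bar G=[0\;\;0\;\;1]$ and the row $A_z\eta_1=b_z$ give $\bar{\mathcal{CZ}}=\{\tfrac12\}$.

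The second example also defeats your fallback. An $\eta_i\in[-1,1]^{n_g}$ with $A_z\eta_i=b_z$ and $G_z\eta_i=\beta_iG_z\xi$ exists only if $c_z+\beta_iG_z\xi\in\mathcal{CZ}$. This fails here for $\beta_1=-1$, and in general at $\beta_i=0$ whenever $c_z\notin\mathcal{CZ}$. Moreover, no re-choice of the $\eta_i$ can restore the $G^{(i)}c_z$ directions that the row $\beta=0$ deletes.

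What your computation actually proves is a corrected statement. Since $\mathcal{M}$ is an unconstrained matrix zonotope, $\beta$ should carry no constraint of its own. The identity $A_z(\beta_i\xi)=\beta_ib_z$ that you derived is exactly the coupling the product blocks must encode, namely $A_z\eta_i-b_z\beta_i=0$:
\[
\bar A=\begin{bmatrix} A_z & 0 & 0\\ 0 & -\mathrm{blkdiag}(b_z,\dots,b_z) & \mathrm{blkdiag}(A_z,\dots,A_z)\end{bmatrix},\qquad
\bar b=\begin{bmatrix} b_z\\ 0\end{bmatrix},
\]
with $\gamma$ copies in each block diagonal. Dropping the product rows altogether would also be valid, only looser. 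With this $\bar A,\bar b$, your $\bar\xi$ satisfies every row because $A_z(\beta_i\xi)-b_z\beta_i=\beta_i(A_z\xi-b_z)=0$, and the argument closes in three lines.

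The missing move was to recognise that the rows as printed are not an awkward convention but make the inclusion false. You should then have given a counterexample and proved the corrected version, rather than trying to meet the literal rows.
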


\subsubsection{Spectral Perturbation Theory}
The following classical results quantify how perturbations affect singular and eigensubspaces.


\begin{theorem}[Wedin's Perturbation Theorem~\cite{wedin1972perturbation}]
Let $M, \tilde{M} = M + E \in \mathbb{R}^{m \times n}$ have singular value decompositions
\[
M = U \Sigma V^{\top}, \qquad \tilde{M} = \tilde{U} \tilde{\Sigma} \tilde{V}^{\top}.
\]
Fix an index $r$ with $1 \le r \le \min\{m,n\}$, and write
\[
U = [\,U_r \; U_\perp\,], \quad V = [\,V_r \; V_\perp\,],
\]
where $U_r \in \mathbb{R}^{m \times r}$ and $V_r \in \mathbb{R}^{n \times r}$ contain the left and right singular vectors associated with the $r$ largest singular values $\sigma_1(M) \ge \cdots \ge \sigma_r(M)$. Define $\delta := \min\{\sigma_r(M) - \sigma_{r+1}(M),\, \sigma_r(M)\} > 0$. Then
\begin{equation}
\max \bigl\{ \|\sin \Theta(U_r, \tilde{U}_r)\|_2,\; \|\sin \Theta(V_r, \tilde{V}_r)\|_2 \bigr\}
\;\le\; \frac{\|E\|_2}{\delta},
\end{equation}
where $\tilde{U}_r$ and $\tilde{V}_r$ are defined analogously from $\tilde{M}$.
\end{theorem}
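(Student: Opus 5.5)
Since the statement is quoted from \cite{wedin1972perturbation}, the paper presumably just cites it. If I were to prove it in the form written, with the gap $\delta$ computed from $M$ alone, I would use the symmetric dilation. Set
\[
H(M)=\begin{bmatrix}0 & M\\ M^{\top} & 0\end{bmatrix},\qquad H(\tilde M)=H(M)+H(E),\qquad \|H(E)\|_2=\|E\|_2 .
\]
The eigenvalues of $H(M)$ are $\pm\sigma_i(M)$, together with zeros. The invariant subspace for $\sigma_1(M),\dots,\sigma_r(M)$ is spanned by the columns of $\tfrac{1}{\sqrt2}\bigl[\begin{smallmatrix}U_r\\ V_r\end{smallmatrix}\bigr]$. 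The rest of the spectrum of $H(M)$ lies in $[-\sigma_1,\sigma_{r+1}]$, so the isolated top block is separated from it by $\min\{\sigma_r-\sigma_{r+1},\,2\sigma_r\}\ge\delta$. This is where the second term $\sigma_r$ in the definition of $\delta$ comes from: it guards against the mirrored eigenvalues $-\sigma_i$ and the zero eigenvalues. The canonical angles of the dilated subspaces control $\max\{\|\sin\Theta(U_r,\tilde U_r)\|_2,\|\sin\Theta(V_r,\tilde V_r)\|_2\}$ up to the usual block argument: write $\tilde U_r,\tilde V_r$ through the dilation and split the residual $E\tilde V_r$, $E^{\top}\tilde U_r$ into its $U_\perp$ and $V_\perp$ components.

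Next comes the case split. If $\|E\|_2\ge\delta$, the bound is trivial, because every sine of a canonical angle is at most $1$. If $\|E\|_2<\delta$, I would write the Sylvester-type identity for the off-diagonal block. Let $X=U_\perp^{\top}\tilde U_r$ and $Y=V_\perp^{\top}\tilde V_r$. Projecting $\tilde M\tilde V_r=\tilde U_r\tilde\Sigma_r$ and $\tilde M^{\top}\tilde U_r=\tilde V_r\tilde\Sigma_r$ onto $U_\perp$ and $V_\perp$ gives
\[
X\tilde\Sigma_r-\Sigma_\perp Y=U_\perp^{\top}E\tilde V_r,\qquad Y\tilde\Sigma_r-\Sigma_\perp^{\top}X=V_\perp^{\top}E^{\top}\tilde U_r .
\]
The standard inversion of this coupled system yields $\max\{\|X\|_2,\|Y\|_2\}\le\|E\|_2/\eta$, where $\eta$ separates $\sigma_r(\tilde M)$ from $\sigma_{r+1}(M)$ (and from $0$).

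The main obstacle is that $\eta$ involves $\tilde\Sigma_r$, and Weyl only gives $\eta\ge\delta-\|E\|_2$. That yields $\|E\|_2/(\delta-\|E\|_2)$, or $2\|E\|_2/\delta$ after a further case split, rather than the stated $\|E\|_2/\delta$. To recover constant $1$ with the unperturbed gap, I would try a symmetric version of the argument that swaps the roles of $M$ and $\tilde M$. In that version I would project onto the unperturbed top subspace and use $\Sigma_r$ against the perturbed complement, then combine the two bounds. I would test the case $r=1$, $2\times2$ first. There the explicit formula $\sin^2\theta=\tfrac12\bigl(1-1/\sqrt{1+t^2}\bigr)$ with $t=2\|E\|_2/\delta$ does satisfy $\sin\theta\le t/2=\|E\|_2/\delta$, because $1/\sqrt{1+x}\ge1-x/2$. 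This suggests the constant-$1$ form holds at least in that configuration.

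If the general constant-$1$ reduction fails, the honest fallback is to state the result with the gap $\delta-\|E\|_2$, or with the factor $2$. That weaker form is what the paper's later perturbation arguments for choosing $P$ would actually need, since they only use the $\mathcal{O}(\|E\|_2/\delta)$ scaling.
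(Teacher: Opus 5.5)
\textbf{The step that fails is the constant-$1$ recovery, because the statement as printed is false.} No symmetrization and no swapping of $M$ and $\tilde M$ can give $\|E\|_2/\delta$ with the unperturbed gap. Your $2\times2$ check used only the purely off-diagonal perturbation, which rotates the singular vectors without closing the gap. The extremal perturbation does both at once. Take $r=1$ and $M=\mathrm{diag}(1,0)$, so $\delta=1$, and
\[
E=\begin{bmatrix}-0.32 & 0.24\\ 0.24 & 0.32\end{bmatrix},\qquad \tilde M=M+E=\begin{bmatrix}0.68 & 0.24\\ 0.24 & 0.32\end{bmatrix}.
\]
Then:
\begin{itemize}
\item $E$ is symmetric with eigenvalues $\pm0.4$, so $\|E\|_2=0.4$.
\item $\tilde M$ is symmetric positive definite with eigenvalues $0.8$ and $0.2$, so $\tilde U_1=\tilde V_1=(2,1)^{\top}/\sqrt5$.
\item Hence $\|\sin\Theta(U_1,\tilde U_1)\|_2=\|\sin\Theta(V_1,\tilde V_1)\|_2=1/\sqrt5\approx0.447>0.4=\|E\|_2/\delta$.
\end{itemize}
More generally, take $M=\mathrm{diag}(\delta,0)$ and $E=\varepsilon\bigl[\begin{smallmatrix}-c & s\\ s & c\end{smallmatrix}\bigr]$ with $c=2\varepsilon/\delta$, $s=\sqrt{1-c^2}$ and $0<\varepsilon<\delta/2$. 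This gives $\sin2\theta=2\varepsilon/\delta$, i.e.\ $\sin\theta=\varepsilon/(\delta\cos\theta)>\varepsilon/\delta$. The excess is only third order, about $\tfrac12(\varepsilon/\delta)^3$, which is why your single-direction check could not detect it. Because $M$ and $\tilde M$ here are symmetric, the same example also refutes the paper's Davis--Kahan statement with the unperturbed gap. The Davis--Kahan step in the maximal-rotation bound therefore needs the same correction.

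The paper gives no proof of its own, only a citation, and the cited result is not what is printed. Wedin's theorem uses the gap between the perturbed top block and the unperturbed complement, $\min\{\sigma_r(\tilde M)-\sigma_{r+1}(M),\,\sigma_r(\tilde M)\}$, which is exactly your $\eta$. So your Sylvester-type step, as written, already proves the genuine theorem with constant $1$. If you replace $\|E\|_2$ by the residuals $U_\perp^{\top}E\tilde V_r$ and $V_\perp^{\top}E^{\top}\tilde U_r$, the bound is even attained on the example: $\eta=0.8$ and the residual is $0.8/\sqrt5$.

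With the unperturbed gap, your fallbacks $\|E\|_2/(\delta-\|E\|_2)$ and $\min\{1,2\|E\|_2/\delta\}$ are the correct statements. Drop the role-swapping idea. It only replaces $\sigma_r(\tilde M)$ by $\sigma_{r+1}(\tilde M)$ in the gap, giving $0.5$ on the example, and no argument can reach $0.4$ there. As you say, the factor $2$ does no harm to the paper's use, which relies only on the $\mathcal{O}(\|E\|_2/\delta)$ scaling.
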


\begin{theorem}[Davis--Kahan $\sin\Theta$ Theorem \cite{davis1970rotation}]
Let $A, \tilde{A} = A + E \in \mathbb{R}^{n \times n}$ be symmetric matrices with eigendecompositions $A = U\Lambda U^{\top}$ and $\tilde{A} = \tilde{U}\tilde{\Lambda}\tilde{U}^{\top}$. Let $\mathcal{S} \subset \{1, \ldots, n\}$ be an index set with spectral gap $\delta = \min_{i \in \mathcal{S}, j \notin \mathcal{S}} |\lambda_i(A) - \lambda_j(A)| > 0$. Then:
\begin{equation}
\|\sin\Theta(U_{\mathcal{S}}, \tilde{U}_{\mathcal{S}})\|_2 \leq \frac{\|E\|_2}{\delta},
\end{equation}
where $U_{\mathcal{S}}$ contains the eigenvectors corresponding to eigenvalues indexed by $\mathcal{S}$.
\end{theorem}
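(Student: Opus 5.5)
The plan is the classical Sylvester-equation argument, followed by a Weyl-type step that converts the gap into one involving only the eigenvalues of $A$, as the statement uses. I also flag at the outset, and return to in the last paragraph, that this conversion does not by itself yield constant~$1$.

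\textbf{Setup.} Write $U=[U_{\mathcal{S}}\;U_{\perp}]$ and $\Lambda=\mathrm{diag}(\Lambda_{\mathcal{S}},\Lambda_{\perp})$ for $A$, and $\tilde U=[\tilde U_{\mathcal{S}}\;\tilde U_{\perp}]$ and $\tilde\Lambda=\mathrm{diag}(\tilde\Lambda_{\mathcal{S}},\tilde\Lambda_{\perp})$ for $\tilde A$. By the standard characterization of canonical angles,
\[
\|\sin\Theta(U_{\mathcal{S}},\tilde U_{\mathcal{S}})\|_2=\|U_{\perp}^{\top}\tilde U_{\mathcal{S}}\|_2 .
\]
So it suffices to bound $X:=U_{\perp}^{\top}\tilde U_{\mathcal{S}}$.

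\textbf{Sylvester identity.} Multiply $\tilde A\tilde U_{\mathcal{S}}=\tilde U_{\mathcal{S}}\tilde\Lambda_{\mathcal{S}}$ on the left by $U_{\perp}^{\top}$. Substituting $\tilde A=A+E$ and using $U_{\perp}^{\top}A=\Lambda_{\perp}U_{\perp}^{\top}$ gives
\[
X\tilde\Lambda_{\mathcal{S}}-\Lambda_{\perp}X=U_{\perp}^{\top}E\,\tilde U_{\mathcal{S}},
\]
and the right-hand side has spectral norm at most $\|E\|_2$.

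\textbf{Solving the Sylvester equation.} Suppose the spectrum of $\tilde\Lambda_{\mathcal{S}}$ is separated from that of $\Lambda_{\perp}$ by a distance $\delta'$, in the Bhatia--Davis--McIntosh sense: one set lies in an interval and the other outside its $\delta'$-neighbourhood, or more generally the separation needed for the spectral-norm version. Then the diagonal Sylvester operator $X\mapsto X\tilde\Lambda_{\mathcal{S}}-\Lambda_{\perp}X$ has inverse of norm at most $1/\delta'$, so
\[
\|X\|_2\le \frac{\|E\|_2}{\delta'}.
\]
In the Frobenius norm this step is entrywise and immediate, since $X_{ij}(\tilde\lambda_j-\lambda_i)=(U_{\perp}^{\top}E\tilde U_{\mathcal{S}})_{ij}$. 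In the spectral norm it requires the interval-separation structure.

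\textbf{Converting to the gap of $A$.} The statement measures $\delta$ from $A$ alone, so I would invoke Weyl's inequality $|\tilde\lambda_i-\lambda_i|\le\|E\|_2$. This gives $\delta'\ge\delta-\|E\|_2$ whenever $\|E\|_2<\delta$. Hence:
\begin{itemize}
\item if $\|E\|_2\le\delta/2$, then $\|X\|_2\le \|E\|_2/(\delta-\|E\|_2)\le 2\|E\|_2/\delta$;
\item if $\|E\|_2>\delta/2$, the trivial bound $\|X\|_2\le1<2\|E\|_2/\delta$ applies.
\end{itemize}

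\textbf{Main obstacle: the constant.} This is the step I expect to fail as worded. The argument above produces $2\|E\|_2/\delta$, not $\|E\|_2/\delta$. With $\delta$ taken from $A$'s eigenvalues only, constant $1$ cannot hold in general. For example, take $A=\mathrm{diag}(\delta,0)$ and $E$ a symmetric perturbation whose size is comparable to $\delta$ and whose direction is chosen to rotate the leading eigenvector by close to $\pi/2$; then $\sin\theta$ approaches $1$ while $\|E\|_2/\delta$ stays below $1$. My first move would be to check this example explicitly. If it confirms the failure, I would present the result in one of two corrected forms:
\begin{itemize}
\item with the mixed gap $\delta'$ between $\lambda_i(\tilde A)$ for $i\in\mathcal{S}$ and $\lambda_j(A)$ for $j\notin\mathcal{S}$, which gives the sharp constant $1$ from the Sylvester step; or
\item with the $A$-only gap and constant $2$, via the Weyl step.
\end{itemize}
Both versions suffice for the qualitative use of the theorem later in the paper, namely that noise of size $\|E\|_2$ tilts the dominant subspace by $\mathcal{O}(\|E\|_2/\delta)$.
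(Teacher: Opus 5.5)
The paper gives no proof of this statement; it quotes it from Davis--Kahan. Your diagnosis is correct and is the main point: as transcribed, with $\delta$ measured on the spectrum of $A$ alone and constant $1$, the statement is false. Your example can be made explicit. Take $A=\mathrm{diag}(\delta,0)$, $\mathcal{S}=\{1\}$ and $E=\mathrm{diag}(-\delta/2-\eta,\ \delta/2+\eta)$ with $0<\eta<\delta/2$. The top eigenvector of $\tilde A=\mathrm{diag}(\delta/2-\eta,\ \delta/2+\eta)$ is $e_2$, so $\|\sin\Theta\|_2=1$, while $\|E\|_2/\delta=1/2+\eta/\delta<1$. Letting $\eta\downarrow 0$ also shows that your constant $2$ is sharp.

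The failure is not limited to large perturbations. For $A=\mathrm{diag}(1,0)$ and $E=\begin{bmatrix}-0.18&0.24\\0.24&0.18\end{bmatrix}$ one has $\|E\|_2=0.3$. The matrix $\tilde A$ has eigenvalues $0.9$ and $0.1$, with top eigenvector $(3,1)^{\top}/\sqrt{10}$, so $\sin\theta=1/\sqrt{10}\approx 0.316>0.3$. More generally, optimizing a traceless $E$ of norm $\epsilon$ gives $\sin\theta=\epsilon/(\delta\cos\theta)>\epsilon/\delta$ for every $\epsilon\in(0,\delta/2)$.

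What Davis and Kahan actually prove is your first corrected form: the gap is taken between $\lambda_i(\tilde A)$, $i\in\mathcal{S}$, and $\lambda_j(A)$, $j\notin\mathcal{S}$, under interval separation. Your Sylvester-equation argument is the standard proof of that theorem. Your second form is the Yu--Wang--Samworth variant, obtained exactly as you do, via Weyl plus the trivial bound $\|X\|_2\le 1$.

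One step you leave vague needs to be pinned down. The statement allows an arbitrary index set $\mathcal{S}$, and you appeal to ``the separation needed for the spectral-norm version'' without checking it. Suppose $\mathcal{S}$ is not a contiguous block $\{s,\dots,r\}$ of the ordered spectrum. Then the two eigenvalue sets can interleave, and under mere separation the spectral-norm inverse of $X\mapsto X\tilde\Lambda_{\mathcal{S}}-\Lambda_{\perp}X$ is only bounded by $\pi/(2\delta')$ (Bhatia--Davis--McIntosh), where $\pi/2$ cannot be replaced by $1$. In that case your argument gives only $\pi/2$ with the mixed gap and $\pi$ with the $A$-only gap, not $1$ and $2$. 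The entrywise Frobenius route still works, but it loses a factor $\sqrt{\min(|\mathcal{S}|,\,n-|\mathcal{S}|)}$.

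For contiguous $\mathcal{S}$, you should state explicitly that Weyl preserves interval separation, not just the pointwise bound $|\lambda_i(\tilde A)-\lambda_j(A)|\ge\delta-\|E\|_2$. Concretely, all $\lambda_i(\tilde A)$ with $i\in\mathcal{S}$ lie in $[\lambda_r-\|E\|_2,\ \lambda_s+\|E\|_2]$. Every $\lambda_j(A)$ with $j\notin\mathcal{S}$ lies outside $(\lambda_r-\delta,\ \lambda_s+\delta)$. This is Davis--Kahan separation with $\delta'=\delta-\|E\|_2$.

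The paper only uses a singleton $\mathcal{S}=\{i\}$, where this is automatic. Note, however, that \eqref{eq:DK} measures $\delta$ on $S_0$ alone and so inherits the transcription error: as stated it needs your factor $2$, or the mixed gap.
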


\subsection{Problem Statement}
With the above tools in place, we formulate the data-driven reachability problem.

Consider the discrete-time linear system
\begin{equation}
x_{k+1} = A x_k + B u_k + w_k, \qquad w_k \in \mathcal{Z}_w \subset \mathbb{R}^n,
\label{eq:sys}
\end{equation}
where $x_k \in \mathbb{R}^n$ and $u_k \in \mathbb{R}^m$ denote the state and input, respectively, $\mathcal{Z}_w$ is a known zonotopic bound on the disturbance, and the system matrices $A \in \mathbb{R}^{n \times n}$ and $B \in \mathbb{R}^{n \times m}$ are unknown.

We are given input--state trajectory data from $K$ experiments of lengths $T_i+1$:
\[
\{x^{(i)}(k)\}_{k=0}^{T_i}, \quad \{u^{(i)}(k)\}_{k=0}^{T_i-1}, \quad i = 1,\dots,K.
\]
By stacking all trajectories, we obtain
\begin{align*}
X_{+} &= \bigl[ x^{(1)}(1) \;\cdots\; x^{(1)}(T_1) \;\cdots \\
&\qquad\quad x^{(K)}(1) \;\cdots\; x^{(K)}(T_K) \bigr], \\
X_{-} &= \bigl[ x^{(1)}(0) \;\cdots\; x^{(1)}(T_1{-}1) \;\cdots \\
&\qquad\quad x^{(K)}(0) \;\cdots\; x^{(K)}(T_K{-}1) \bigr], \\
U_{-} &= \bigl[ u^{(1)}(0) \;\cdots\; u^{(1)}(T_1{-}1) \;\cdots \\
&\qquad\quad u^{(K)}(0) \;\cdots\; u^{(K)}(T_K{-}1) \bigr],
\end{align*}
and let $T = \sum_{i=1}^K T_i$ be the total number of data points.

Following the data-driven reachability formulation in~\cite{alanwar2023data}, all matrices $[A\;B]$ consistent with the data and bounded noise can be over-approximated by the matrix zonotope
\begin{equation}
\mathcal{M}_{\Sigma}
= \bigl( X_{+} - \mathcal{M}_w \bigr)
\begin{bmatrix}
X_{-} \\
U_{-}
\end{bmatrix}^{\dagger},
\label{eq:msigma}
\end{equation}
where $\mathcal{M}_w$ accounts for the disturbance. Theorem~\ref{thm:noise_to_matZono} later shows that $\mathcal{M}_{\Sigma}$ is concretely a matrix zonotope
\[
\mathcal{M}_{\Sigma} = \bigl\langle C,\, \{G^{(i,j)}\}_{i=1,\dots,p}^{j=1,\dots,T} \bigr\rangle,
\]
with center $C = X_{+}H$ and rank-one generators $G^{(i,j)} = -g_w^{(i)} h_j^{\top}$, where $H$ is any right inverse of $[X_{-}; U_{-}]$, $\{g_w^{(i)}\}$ are the noise generators of $\mathcal{M}_w$, and $h_j^{\top}$ is the $j$-th row of $H$.
The corresponding data-driven reachable sets are propagated as
\begin{equation}
\hat{\mathcal{R}}_{k+1}
= \mathcal{M}_{\Sigma} \bigl( \hat{\mathcal{R}}_k \times \mathcal{U}_k \bigr) \oplus \mathcal{Z}_w.
\label{eq:dd-reach}
\end{equation}

A central difficulty with the recursion \eqref{eq:dd-reach} is the rapid growth of the zonotopic representation. Let
\[
\mathcal{M}_{\Sigma} = \langle C, \{ G^{(1)}, \dots, G^{(\gamma_M)} \} \rangle
\]
denote the data-driven matrix zonotope with $\gamma_M$ matrix generators, let $\hat{\mathcal{R}}_k$ have $g_k$ generators, let the input set $\mathcal{U}_k$ have $g_u$ generators, and let the noise set $\mathcal{Z}_w$ have $g_w$ generators. Using the standard over-approximation for the product of a matrix zonotope with a constrained zonotope, the next reachable set \(\hat{\mathcal{R}}_{k+1}\) contains
\begin{equation}
g_{k+1}
= (1 + \gamma_M)\bigl( g_k + g_u \bigr) + \gamma_M + g_w
\label{eq:gen-growth-exact}
\end{equation}
generators. Since $\gamma_M \ge 1$ whenever the data implies any model uncertainty, the factor $(1+\gamma_M) > 1$ leads to effectively exponential growth of $g_k$ with $k$; equivalently,
\[
g_k = (1+\gamma_M)^k g_0 + \mathcal{O}\bigl( (1+\gamma_M)^k \bigr).
\]
This generator explosion directly translates into rapidly expanding reachable sets and prohibitive computational effort. The goal of this work is to modify the data-driven reachability construction so as to retain the model consistency of \eqref{eq:msigma}--\eqref{eq:dd-reach} while substantially mitigating the generator growth described in \eqref{eq:gen-growth-exact}.

\section{Girard Order Reduction With Orthogonal Transformation}
\label{Invariance}

Since the generator count grows exponentially with the time horizon~\eqref{eq:gen-growth-exact}, order reduction is unavoidable. Girard's reduction~\cite{girard2005reachability} uses axis-aligned interval hulls to aggregate discarded generators, making the resulting volume dependent on the coordinate system. Our work improves the \emph{tightness} of this reduction under a fixed generator budget by choosing an orthogonal matrix~$P$ that rotates into a frame where axis-aligned aggregation is most efficient, thereby reducing the wrapping effect. We first establish a basic invariance property.
\begin{theorem}[Data-Driven Reachability under Orthogonal Transformations]
\label{thm:orthogonal_invariance}
Let $P \in \mathbb{R}^{n \times n}$ be an orthogonal matrix, i.e., $P^{\top}P = PP^{\top} = I$, and consider the coordinate change $z_k = P^{\top} x_k$. Let \eqref{eq:msigma} be the data-driven matrix zonotope containing all $[A\;B]$ consistent with the data and bounded noise~\cite{alanwar2023data}. Then:
\begin{enumerate}[(i)]
\item The transformed data-driven set of models is
\begin{equation}
\tilde{\mathcal{M}}_{\Sigma}
= P^{\top} \mathcal{M}_{\Sigma}
\begin{bmatrix}
P & 0 \\
0 & I_m
\end{bmatrix},
\label{eq:msigma-transformed}
\end{equation}
i.e., it is exactly the original matrix zonotope seen in the rotated coordinates.
\item If the reachable sets are propagated as
\begin{equation}
\hat{\mathcal{R}}_{k+1}
= \mathcal{M}_{\Sigma}(\hat{\mathcal{R}}_k \times \mathcal{U}_k) \oplus \mathcal{Z}_w,
\end{equation}
then the reachable sets in the transformed coordinates,
\begin{equation}
\tilde{\mathcal{R}}_{k+1}
= \tilde{\mathcal{M}}_{\Sigma}(\tilde{\mathcal{R}}_k \times \mathcal{U}_k) \oplus \tilde{\mathcal{Z}}_w,
\qquad
\tilde{\mathcal{Z}}_w = P^{\top} \mathcal{Z}_w,
\end{equation}
satisfy
\[
\tilde{\mathcal{R}}_{k+1} = P^{\top} \hat{\mathcal{R}}_{k+1}
\quad \text{for all } k \ge 0,
\]
provided $\tilde{\mathcal{R}}_0 = P^{\top} \hat{\mathcal{R}}_0$.
\end{enumerate}
In other words, rotating the state space by an orthogonal matrix, doing data-driven reachability there, and rotating back yields exactly the same sets.
\end{theorem}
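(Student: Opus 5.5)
Part (i) comes from rewriting the system in rotated coordinates and redoing the data-driven construction there. Part (ii) is an induction that uses linearity of the matrix-zonotope/zonotope product in the matrix argument.

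\emph{Part (i).} Substituting $x_k = P z_k$ into \eqref{eq:sys} and multiplying by $P^{\top}$ gives
\[
z_{k+1} = (P^{\top}AP) z_k + (P^{\top}B) u_k + P^{\top}w_k,
\qquad P^{\top}w_k \in P^{\top}\mathcal{Z}_w .
\]
The rotated true model is therefore $[\tilde A\;\tilde B] = P^{\top}[A\;B]\operatorname{blkdiag}(P,I_m)$. The rotated data matrices are $\tilde X_{\pm} = P^{\top}X_{\pm}$ with $U_-$ unchanged, so
\[
\begin{bmatrix}\tilde X_- \\ U_-\end{bmatrix} = \operatorname{blkdiag}(P^{\top},I_m)\begin{bmatrix} X_- \\ U_-\end{bmatrix}.
\]
Its pseudoinverse is $\bigl[\begin{smallmatrix} X_-\\U_-\end{smallmatrix}\bigr]^{\dagger}\operatorname{blkdiag}(P,I_m)$. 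This uses the standard identity $(QM)^{\dagger}=M^{\dagger}Q^{\top}$ for orthogonal $Q$, which one checks directly against the four Penrose conditions, since $\operatorname{blkdiag}(P^{\top},I_m)$ is orthogonal.

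The noise matrix zonotope transforms as $\tilde{\mathcal{M}}_w = P^{\top}\mathcal{M}_w$: each column generator $g_w^{(i)}$ becomes $P^{\top}g_w^{(i)}$, and the center transforms the same way. Plugging into \eqref{eq:msigma} gives
\[
\tilde{\mathcal{M}}_\Sigma
= \bigl(P^{\top}X_+ - P^{\top}\mathcal{M}_w\bigr)\begin{bmatrix} X_- \\ U_-\end{bmatrix}^{\dagger}\operatorname{blkdiag}(P,I_m)
= P^{\top}\mathcal{M}_\Sigma \operatorname{blkdiag}(P,I_m).
\]
On the generator level, the center becomes $P^{\top}C\,\operatorname{blkdiag}(P,I_m)$ and each generator becomes $P^{\top}G^{(i)}\operatorname{blkdiag}(P,I_m)$, with the same coefficients $\beta_i$. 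So the two matrix zonotopes coincide as sets and as representations.

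\emph{Part (ii).} I would argue by induction on $k$, with base case $\tilde{\mathcal{R}}_0=P^{\top}\hat{\mathcal{R}}_0$. For the inductive step, write $\tilde{\mathcal{R}}_k\times\mathcal{U}_k = \operatorname{blkdiag}(P^{\top},I_m)(\hat{\mathcal{R}}_k\times\mathcal{U}_k)$. Then
\[
\tilde{\mathcal{M}}_\Sigma(\tilde{\mathcal{R}}_k\times\mathcal{U}_k)
= P^{\top}\mathcal{M}_\Sigma \operatorname{blkdiag}(P,I_m)\operatorname{blkdiag}(P^{\top},I_m)(\hat{\mathcal{R}}_k\times\mathcal{U}_k),
\]
and the inner product of the two block-diagonal factors equals $I$. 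Finally, the Minkowski sum commutes with a linear map: $P^{\top}\mathcal{X}\oplus P^{\top}\mathcal{Z}_w = P^{\top}(\mathcal{X}\oplus\mathcal{Z}_w)$.

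\emph{Main obstacle.} The delicate step is that the propagation uses the over-approximating product of Proposition~\ref{prop:matzonotope_conzonotope}, not the exact set product. So I must check that this representation-level operation is itself equivariant, not merely that the exact sets agree. I would verify this directly on the formulas for $\bar c,\bar G,\bar A,\bar b$:
\begin{itemize}
\item With the transformed center and generators $P^{\top}C\,\operatorname{blkdiag}(P,I_m)$, $P^{\top}G^{(i)}\operatorname{blkdiag}(P,I_m)$ and transformed input $\operatorname{blkdiag}(P^{\top},I_m)G_z$, $\operatorname{blkdiag}(P^{\top},I_m)c_z$, every block of $\bar G$ and $\bar c$ gets multiplied on the left by $P^{\top}$.
\item The constraints $\bar A,\bar b$ are unchanged, because $A_z,b_z$ are untouched by the linear map according to Proposition~\ref{prop:constrained_operations}.
\end{itemize}
Hence the transformed constrained zonotope equals $P^{\top}$ times the original one, representation by representation. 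Combining this with the Minkowski-sum identity of Proposition~\ref{prop:constrained_operations} closes the induction, with equality holding both as sets and generator-wise.
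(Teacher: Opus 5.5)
Your proposal is correct and follows the paper's proof: the pseudoinverse identity for the orthogonal factor $\operatorname{blkdiag}(P^{\top},I_m)$, substitution into \eqref{eq:msigma}, and an induction that cancels $\operatorname{blkdiag}(P,I_m)\operatorname{blkdiag}(P^{\top},I_m)=I$ and uses linearity of the Minkowski sum. Your extra checks make explicit what the paper leaves implicit when it cancels the two block-diagonal factors as if the product were an exact, associative set operation; these are the Penrose-condition verification and the block-by-block equivariance of the over-approximating product in Proposition~\ref{prop:matzonotope_conzonotope}.
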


\begin{proof}
Under $z = P^{\top} x$, the data matrices transform as
\[
\tilde{X}_{+} = P^{\top} X_{+}, \qquad
\tilde{X}_{-} = P^{\top} X_{-}, \qquad
\tilde{\mathcal{M}}_{w} = P^{\top} \mathcal{M}_{w}.
\]
Therefore, the data-driven set of models in the new coordinates is
\begin{equation}
\label{eq:tilde-MSigma-start}
\tilde{\mathcal{M}}_{\Sigma}
= (\tilde{X}_{+} - \tilde{\mathcal{M}}_{w})
\begin{bmatrix}
\tilde{X}_{-} \\
U_-
\end{bmatrix}^{\dagger}
= (P^{\top} X_{+} - P^{\top} \mathcal{M}_{w})
\begin{bmatrix}
P^{\top} X_{-} \\
U_-
\end{bmatrix}^{\dagger}.
\end{equation}
Now use the Moore–Penrose pseudo-inverse identity for orthogonal matrices: for
\[
\begin{bmatrix}
P^{\top} & 0 \\
0 & I_m
\end{bmatrix}
\begin{bmatrix}
X_- \\
U_-
\end{bmatrix}
=
\begin{bmatrix}
P^{\top} X_- \\
U_-
\end{bmatrix},
\]
we have
\begin{equation}
\label{eq:pinv-identity}
\begin{bmatrix}
P^{\top} X_- \\
U_-
\end{bmatrix}^{\dagger}
=
\begin{bmatrix}
X_- \\
U_-
\end{bmatrix}^{\dagger}
\begin{bmatrix}
P & 0 \\
0 & I_m
\end{bmatrix},
\end{equation}
substituting \eqref{eq:pinv-identity} into \eqref{eq:tilde-MSigma-start} yields
\begin{equation}
\begin{aligned}
\tilde{\mathcal{M}}_{\Sigma}
&= P^{\top} (X_{+} - \mathcal{M}_{w})
\begin{bmatrix}
X_- \\
U_-
\end{bmatrix}^{\dagger}
\begin{bmatrix}
P & 0 \\
0 & I_m
\end{bmatrix} \\
&= P^{\top} \mathcal{M}_{\Sigma}
\begin{bmatrix}
P & 0 \\
0 & I_m
\end{bmatrix},
\end{aligned}
\end{equation}
which is exactly \eqref{eq:msigma-transformed}.

For the reachable sets, assume $\tilde{\mathcal{R}}_k = P^{\top} \hat{\mathcal{R}}_k$. Then
\begin{align*}
\tilde{\mathcal{R}}_{k+1}
&= \tilde{\mathcal{M}}_{\Sigma}(\tilde{\mathcal{R}}_k \times \mathcal{U}_k) \oplus \tilde{\mathcal{Z}}_w \\
&= P^{\top} \mathcal{M}_{\Sigma}
\begin{bmatrix}
P & 0 \\
0 & I_m
\end{bmatrix}
\begin{bmatrix}
P^{\top} & 0 \\
0 & I_m
\end{bmatrix}
(\hat{\mathcal{R}}_k \times \mathcal{U}_k) \oplus P^{\top} \mathcal{Z}_w \\
&= P^{\top} \mathcal{M}_{\Sigma}(\hat{\mathcal{R}}_k \times \mathcal{U}_k) \oplus P^{\top} \mathcal{Z}_w \\
&= P^{\top} \hat{\mathcal{R}}_{k+1}.
\end{align*}

By induction over $k$, the claim holds for all $k \ge 0$.
\end{proof}

\subsection{Integration with Girard Order Reduction}

While orthogonal transformations preserve data-driven reachable sets exactly, their practical benefit appears when combined with order-reduction techniques. We recall the Girard reduction operator $\operatorname{GR}(\cdot)$~\cite{girard2005reachability}: given a zonotope $\mathcal{Z} = \langle c, G \rangle$ with $G \in \mathbb{R}^{n \times p}$ and a target order~$\rho$ (so the reduced zonotope has at most $\rho n$ generators), $\operatorname{GR}(\mathcal{Z})$ keeps the $\rho n - n$ generators with largest $\ell_2$ norm and replaces the remaining generators by the smallest axis-aligned box (interval hull) that contains them. The interval hull half-widths are
$d_i = \sum_{j \in \mathcal{I}_{\text{disc}}} |G_{ij}|$,
and the result is
\[
\operatorname{GR}(\mathcal{Z}) = \bigl\langle c,\; [G_{:,\mathcal{I}_{\text{keep}}},\; \mathrm{diag}(d)] \bigr\rangle,
\]
which is an outer approximation: $\mathcal{Z} \subseteq \operatorname{GR}(\mathcal{Z})$. Because the aggregation is axis-aligned, the quality of the reduction depends heavily on the coordinate system, motivating our orthogonal transformation approach.

Let $\hat{\mathcal{R}}_{k+1} \subset \mathbb{R}^n$ be a zonotopic reachable set and let $P \in \mathbb{R}^{n \times n}$ be orthogonal. We consider the composite operation
\begin{enumerate}
\item \textit{projection}: $P^{\top} \hat{\mathcal{R}}_{k+1}$,
\item \textit{order reduction}: apply Girard’s reduction $\operatorname{GR}(\cdot)$ to $P^{\top} \hat{\mathcal{R}}_{k+1}$,
\item \textit{back-mapping}: map the reduced set back via $P$.
\end{enumerate}
The resulting set is
\[
P \cdot \operatorname{GR}\bigl(P^{\top} \hat{\mathcal{R}}_{k+1}\bigr).
\]
Now consider the data-driven reachable set
\begin{equation}
\hat{\mathcal{R}}_{k+1}
= \mathcal{M}_{\Sigma}(\hat{\mathcal{R}}_k \times \mathcal{U}_k) \oplus \mathcal{Z}_w
= \langle c_{k+1}, G_{k+1} \rangle,
\end{equation}
with center $c_{k+1} \in \mathbb{R}^n$ and generator matrix $G_{k+1} \in \mathbb{R}^{n \times p}$. Directly applying Girard reduction to $\hat{\mathcal{R}}_{k+1}$ in the original coordinates tends to be conservative, because many small generators are aggregated along fixed axes. By first rotating with a suitable $P$ (e.g., aligned with dominant directions of $\mathcal{M}_{\Sigma}$), these generators become more structured in the transformed space, and the reduction step discards less relevant directions more effectively. After back-mapping we obtain
\begin{equation}
P \cdot \operatorname{GR}\bigl(P^{\top} \hat{\mathcal{R}}_{k+1}\bigr)
= \langle P \tilde{c}_{k+1}^{\mathrm{red}},\; P \tilde{G}_{k+1}^{\mathrm{red}} \rangle,
\end{equation}
which is an outer approximation of $\hat{\mathcal{R}}_{k+1}$ but typically tighter than applying $\operatorname{GR}(\cdot)$ directly in the original coordinates. For a well-chosen $P$ we observe
\begin{equation}\label{eq:vol_comp}
\mathrm{vol}\bigl( P \cdot \operatorname{GR}(P^{\top} \hat{\mathcal{R}}_{k+1}) \bigr)
\ll
\mathrm{vol}\bigl( \operatorname{GR}(\hat{\mathcal{R}}_{k+1}) \bigr).
\end{equation}

\begin{definition}[Dominant directions]
\label{def:dominant_directions}
Let $\hat{\mathcal{R}}_{k+1} = \langle c, G \rangle$ and let $G_{\text{disc}} \in \mathbb{R}^{n \times m}$ denote the submatrix of generators that would be discarded under a fixed reduction order. Define
$S := G_{\text{disc}} G_{\text{disc}}^{\top} \in \mathbb{R}^{n \times n}$.
The \emph{dominant directions} are the eigenvectors of $S$ (equivalently, the left singular vectors of $G_{\text{disc}}$) corresponding to its largest eigenvalues. A \emph{PCA basis} of $G_{\text{disc}}$ is any orthogonal $P$ whose columns diagonalize $S$, i.e., $P^{\top}SP$ is diagonal.
\end{definition}

\begin{lemma}[Containment and volume reduction]
\label{rem:containment_volume}
Let $P \in \mathbb{R}^{n \times n}$ be orthogonal and let $\hat{\mathcal{R}}_{k+1}$ be a zonotope. Then
\begin{enumerate}
\item (\emph{Containment})
\[
\hat{\mathcal{R}}_{k+1}
\subseteq
P \cdot \operatorname{GR}\bigl(P^{\top} \hat{\mathcal{R}}_{k+1}\bigr).
\]
\item (\emph{Volume reduction}) A PCA basis of the discarded generators (Definition~\ref{def:dominant_directions}) minimizes a computable upper bound on the volume of the interval-hull term injected by Girard reduction, and in practice yields significantly smaller volumes as shown in Example~\ref{ex:volume_reduction}.
\end{enumerate}
\end{lemma}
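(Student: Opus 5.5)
For the containment part, I will combine the fact that Girard reduction is an outer approximation, $\mathcal{Z}\subseteq\operatorname{GR}(\mathcal{Z})$ for every zonotope $\mathcal{Z}$, with the fact that linear maps preserve inclusions. I will set $\mathcal{Z}:=P^{\top}\hat{\mathcal{R}}_{k+1}=\langle P^{\top}c, P^{\top}G\rangle$, which is again a zonotope. Then $P^{\top}\hat{\mathcal{R}}_{k+1}\subseteq\operatorname{GR}(P^{\top}\hat{\mathcal{R}}_{k+1})$. Applying $P$ to both sides and using $PP^{\top}=I$ gives $\hat{\mathcal{R}}_{k+1}=PP^{\top}\hat{\mathcal{R}}_{k+1}\subseteq P\cdot\operatorname{GR}(P^{\top}\hat{\mathcal{R}}_{k+1})$.

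For completeness I will also verify the inclusion $\mathcal{Z}\subseteq\operatorname{GR}(\mathcal{Z})$ directly. Write any point as $c+G_{\text{keep}}\xi_{\text{keep}}+G_{\text{disc}}\xi_{\text{disc}}$. The $i$-th coordinate of $G_{\text{disc}}\xi_{\text{disc}}$ is bounded in absolute value by $d_i=\sum_j|G_{ij}|$, so this term equals $\mathrm{diag}(d)\eta$ for some $\eta\in[-1,1]^n$. One subtlety is that the set of discarded generators is chosen by $\ell_2$ norm. Since $P$ is orthogonal it preserves $\ell_2$ norms, so the kept and discarded index sets are the same in both frames; this is not needed for containment, but it is needed in the volume part.

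For the volume part, I first have to make ``computable upper bound'' precise. The injected term is the box $\langle 0,\mathrm{diag}(d(P))\rangle$ with $d_i(P)=\sum_j|p_i^{\top}g_j|$, where $p_i$ are the columns of $P$ and $g_j$ the columns of $G_{\text{disc}}$. Its volume is $2^n\prod_i d_i(P)$. Cauchy--Schwarz over the $m$ discarded generators gives $d_i(P)^2\le m\sum_j(p_i^{\top}g_j)^2=m\,p_i^{\top}Sp_i$, with $S=G_{\text{disc}}G_{\text{disc}}^{\top}$ as in Definition~\ref{def:dominant_directions}. Hence
\[\mathrm{vol}\le 2^n m^{n/2}\Bigl(\textstyle\prod_i p_i^{\top}Sp_i\Bigr)^{1/2},\]
and this right-hand side is the bound I will take. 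By Hadamard's inequality applied to the positive semidefinite matrix $P^{\top}SP$, the product of its diagonal entries is at least $\det(P^{\top}SP)=\det S$. Equality holds exactly when $P^{\top}SP$ is diagonal (when $S\succ0$; in the singular case a PCA basis still attains the value $0=\det S$). So a PCA basis attains the minimum of the bound, namely $2^n m^{n/2}\sqrt{\det S}$, over all orthogonal $P$.

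The main obstacle is deciding which upper bound the statement intends. I will commit to the Cauchy--Schwarz/Hadamard bound above, since it gives a clean optimality statement. The phrase ``in practice yields significantly smaller volumes'' is empirical and is deferred to Example~\ref{ex:volume_reduction} rather than proved.
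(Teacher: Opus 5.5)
Your proposal is correct and follows the same route as the paper. For containment you use $x = PP^{\top}x$ together with the outer-approximation property of $\operatorname{GR}$, and for the volume part you use the Cauchy--Schwarz bound $d_i(P)\le\sqrt{m\,(P^{\top}SP)_{ii}}$ followed by Hadamard's inequality with $\det(P^{\top}SP)=\det S$. Your two extra remarks are sound refinements the paper leaves implicit: orthogonality preserves $\ell_2$ norms, so the discarded index set is the same in both frames, and the Hadamard equality case needs care when $S$ is singular, since the paper's ``if and only if diagonal'' is only exact for $S\succ 0$.
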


\begin{proof}
\textit{Containment:} Let $x \in \hat{\mathcal{R}}_{k+1}$ be arbitrary. Since $P$ is orthogonal, $x = P P^{\top} x$. Define $z = P^{\top} x$, then $z \in P^{\top} \hat{\mathcal{R}}_{k+1}$. By the outer-approximation property of Girard reduction \cite[Theorem~2]{girard2005reachability},
\begin{equation}
P^{\top} \hat{\mathcal{R}}_{k+1}
\subseteq
\operatorname{GR}(P^{\top} \hat{\mathcal{R}}_{k+1}).
\end{equation}
Hence $z \in \operatorname{GR}(P^{\top} \hat{\mathcal{R}}_{k+1})$, and after back-mapping we obtain
\begin{equation}
x = P z \in P \cdot \operatorname{GR}(P^{\top} \hat{\mathcal{R}}_{k+1}),
\end{equation}
which proves containment.

\textit{Volume reduction:}
Let $G_{\text{disc}} \in \mathbb{R}^{n \times m}$ be the submatrix of discarded generators (the $m$ generators with smallest $\ell_2$ norms). The interval hull of the discarded part has half-width vector
\[
d(P) = |P^{\top}G_{\text{disc}}| \mathbf{1}_m,
\qquad
d_i(P) = \|e_i^{\top} P^{\top} G_{\text{disc}}\|_1,
\]
and the volume of the injected box $\mathcal{B}(d(P))$ satisfies
\begin{equation}\label{eq:box_vol_bound}
\mathrm{vol}(\mathcal{B}(d(P))) = 2^n \prod_{i=1}^n d_i(P)
\;\le\; (2\sqrt{m})^n \sqrt{\prod_{i=1}^n (P^{\top}SP)_{ii}},
\end{equation}
where $S = G_{\text{disc}}G_{\text{disc}}^{\top}$.
The inequality follows from the Cauchy--Schwarz inequality:
\[
d_i(P) \le \sqrt{m}\,\|e_i^{\top}P^{\top}G_{\text{disc}}\|_2
= \sqrt{m\,(P^{\top}SP)_{ii}}.
\]

The right-hand side of \eqref{eq:box_vol_bound} is minimized over all orthogonal $P$ by any $P$ that diagonalizes $S$, i.e., a PCA basis of $G_{\text{disc}}$. This follows from the Hadamard inequality: for any positive semidefinite matrix $Q$,
$\det(Q) \le \prod_i Q_{ii}$,
with equality if and only if $Q$ is diagonal. Since $\det(P^{\top}SP) = \det(S)$ for all orthogonal $P$, minimizing $\prod_i (P^{\top}SP)_{ii}$ is equivalent to making $P^{\top}SP$ diagonal, which is achieved by the eigenvectors of $S$.

We note that \eqref{eq:box_vol_bound} provides a rigorous motivation for aligning $P$ with dominant directions, as it minimizes a computable upper bound on the volume of the interval hull injected by Girard reduction. However, it does not imply a universal ordering between
$\mathrm{vol}(\operatorname{GR}(\hat{\mathcal{R}}_{k+1}))$
and
$\mathrm{vol}(P\operatorname{GR}(P^{\top}\hat{\mathcal{R}}_{k+1}))$
for arbitrary~$P$, because the total reduced-set volume also depends on the kept generators and Minkowski-sum interactions. The strong volume improvements are therefore presented as an empirical observation, confirmed in Example~\ref{ex:volume_reduction} (a 244-fold reduction) and Section~\ref{sec:numerical}.
\end{proof}

We can further tighten the over-approximation by combining the two reduced sets through intersection. 

\begin{lemma}[Intersection-Based Refinement]
\label{thm:intersection_refinement}
Let $\hat{\mathcal{R}}_{k+1}$ be a zonotope and $P \in \mathbb{R}^{n \times n}$ be an orthogonal matrix. The intersection of direct and projection-based reductions yields a constrained zonotope that contains the original set:
\begin{equation}
\hat{\mathcal{R}}_{k+1} \subseteq \mathcal{CZ}_{\text{int}} = \operatorname{GR}(\hat{\mathcal{R}}_{k+1}) \cap P \cdot \operatorname{GR}(P^{\top}\hat{\mathcal{R}}_{k+1})
\end{equation}
where $\mathcal{CZ}_{\text{int}}$ is a constrained zonotope computed using Proposition~\ref{prop:constrained_operations}.
\end{lemma}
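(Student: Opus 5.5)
The plan has two parts: show containment in each of the two reduced sets separately, then show that their intersection is a constrained zonotope that Proposition~\ref{prop:constrained_operations} computes exactly.

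\emph{Containment in each set.} The outer-approximation property of Girard reduction \cite[Theorem~2]{girard2005reachability} gives $\hat{\mathcal{R}}_{k+1} \subseteq \operatorname{GR}(\hat{\mathcal{R}}_{k+1})$ directly. Part~1 of Lemma~\ref{rem:containment_volume} gives $\hat{\mathcal{R}}_{k+1} \subseteq P \cdot \operatorname{GR}(P^{\top}\hat{\mathcal{R}}_{k+1})$ for any orthogonal $P$. A set contained in two sets is contained in their intersection, so $\hat{\mathcal{R}}_{k+1} \subseteq \mathcal{CZ}_{\text{int}}$ as sets. This part is purely set-theoretic and needs no structure of $P$ beyond orthogonality.

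\emph{Representation as a constrained zonotope.} I would write both reduced sets as constrained zonotopes with empty constraints:
\[
\operatorname{GR}(\hat{\mathcal{R}}_{k+1}) = \langle G_1, c_1, [\,], [\,] \rangle, \qquad P\cdot\operatorname{GR}(P^{\top}\hat{\mathcal{R}}_{k+1}) = \langle P\tilde G, P\tilde c, [\,], [\,]\rangle,
\]
where the second form uses the linear-map identity of Proposition~\ref{prop:constrained_operations}. I would then apply the intersection identity $\mathcal{CZ}\cap_R\mathcal{Y}$ of the same proposition with $R = I_n$. This yields
\[
\mathcal{CZ}_{\text{int}} = \bigl\langle [G_1\;0],\, c_1,\, [G_1\;\; -P\tilde G],\, P\tilde c - c_1\bigr\rangle .
\]
This is a constrained zonotope whose generator count is the sum of the two reduced counts and which has $n$ equality constraints.

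\emph{Main obstacle.} The one point needing care is that the intersection formula in Proposition~\ref{prop:constrained_operations} is \emph{exact}, not just an outer approximation. I would check this directly. A point $c_1 + G_1\xi_1$ with $\xi_1$ in the unit box lies in the second set if and only if there exists $\xi_2$ in the unit box with $c_1 + G_1\xi_1 = P\tilde c + P\tilde G\xi_2$. That condition is precisely the added constraint $G_1\xi_1 - P\tilde G\xi_2 = P\tilde c - c_1$. Hence the constrained-zonotope representation describes the intersection exactly, and the containment from the first step transfers to $\mathcal{CZ}_{\text{int}}$.
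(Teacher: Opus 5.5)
Your proposal is correct, and its containment argument is the paper's: Girard's outer-approximation property gives $\hat{\mathcal{R}}_{k+1} \subseteq \operatorname{GR}(\hat{\mathcal{R}}_{k+1})$, part~1 of Lemma~\ref{rem:containment_volume} gives containment in $P \cdot \operatorname{GR}(P^{\top}\hat{\mathcal{R}}_{k+1})$, and a set contained in both lies in their intersection. The paper leaves implicit your extra check that the intersection identity of Proposition~\ref{prop:constrained_operations} with $R = I_n$ represents the intersection exactly; that check is what justifies carrying the set-theoretic containment over to the computed constrained zonotope $\mathcal{CZ}_{\text{int}}$.
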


\begin{proof}
The containment follows from the fundamental properties of set operations. Since Girard reduction preserves over-approximation (\cite[Theorem 2]{girard2005reachability}), we have:
\begin{equation}
\hat{\mathcal{R}}_{k+1} \subseteq \operatorname{GR}(\hat{\mathcal{R}}_{k+1})
\end{equation}
From Lemma~\ref{rem:containment_volume}, we also established:
\begin{equation}
\hat{\mathcal{R}}_{k+1} \subseteq P \cdot \operatorname{GR}(P^{\top}\hat{\mathcal{R}}_{k+1})
\end{equation}
Since $\hat{\mathcal{R}}_{k+1}$ is contained in both sets, it must be contained in their intersection:
\begin{equation}
\hat{\mathcal{R}}_{k+1} \subseteq \operatorname{GR}(\hat{\mathcal{R}}_{k+1}) \cap P \cdot \operatorname{GR}(P^{\top}\hat{\mathcal{R}}_{k+1})
\end{equation}
\end{proof}

\begin{example}
\label{ex:volume_reduction}
Consider a five-dimensional system with
\begin{align*}
\mathcal{Z}_w &= \langle 0,\; 0.06 \cdot \mathrm{diag}([1, 2, 1.3, 1, 1.5]) \rangle, \\
\mathcal{X}_0 &= \langle 3 \cdot \mathbf{1}_5,\; G_0 \rangle, \\
\mathcal{U}_0 &= \langle 10,\; 0.25 \rangle,
\end{align*}
where $G_0 \in \mathbb{R}^{5 \times 6}$ contains the prescribed generators. After one data-driven propagation step with a fixed orthogonal matrix $P$, the volumes are
\begin{itemize}
\item direct Girard reduction: $\mathrm{vol}(\operatorname{GR}(\hat{\mathcal{R}}_1)) = 1438$,
\item projection-based reduction: $\mathrm{vol}(P \cdot \operatorname{GR}(P^{\top}\hat{\mathcal{R}}_1)) = 5.88$.
\end{itemize}
Thus, the projection-based reduction yields about a 244-fold volume decrease while still remaining an outer approximation.
\end{example}

In Figure~\ref{fig:reduction_comparison_updated}, the green solid boundary (a constrained zonotope constructed as in Proposition~\ref{prop:constrained_operations}) depicts the intersection of the two reduced sets and gives the tightest outer approximation among them. This set is still not smaller than the original reachable set (blue dashed line). 


For the projection $P$ in Example~\ref{ex:volume_reduction},
\begin{align*}
\mathrm{vol}(\hat{\mathcal{R}}_{k+1}) &< \mathrm{vol}(\mathcal{CZ}_{\text{int}})
< \mathrm{vol}\!\bigl(P \cdot \operatorname{GR}(P^{\top}\hat{\mathcal{R}}_{k+1})\bigr) \\
&= 5.88
\ll \mathrm{vol}\!\bigl(\operatorname{GR}(\hat{\mathcal{R}}_{k+1})\bigr) = 1438.
\end{align*}
When constrained zonotopes are affordable, $\mathcal{CZ}_{\text{int}}$ is the preferred approximation; otherwise the projection-based reduction alone offers a substantial gain. The approach applies to any representation closed under orthogonal maps.

\begin{figure}
\centering
\includegraphics[width=\columnwidth]{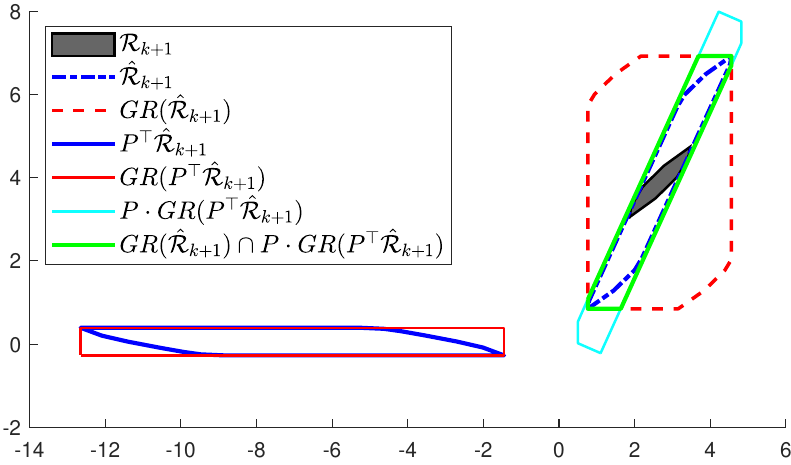}
\caption{Comparison of order reduction strategies with intersection refinement. The plot shows projections onto the 2D plane of: model-based ground truth (gray filled), original data-driven set $\hat{\mathcal{R}}_{k+1}$ (blue dashed), direct Girard reduction $\operatorname{GR}(\hat{\mathcal{R}}_{k+1})$ (red dashed), transformed set $P^{\top}\hat{\mathcal{R}}_{k+1}$ (blue solid), transformed reduction $\operatorname{GR}(P^{\top}\hat{\mathcal{R}}_{k+1})$ (red solid), back-projected set $P \cdot \operatorname{GR}(P^{\top}\hat{\mathcal{R}}_{k+1})$ (cyan), and the intersection $\operatorname{GR}(\hat{\mathcal{R}}_{k+1}) \cap P \cdot \operatorname{GR}(P^{\top}\hat{\mathcal{R}}_{k+1})$ (green solid).}
\label{fig:reduction_comparison_updated}
\end{figure}

\subsection{Optimal Orthogonal Matrix Selection}
Having established the projection--reduction framework, we now formulate the selection of~$P$ as an optimization problem.

We seek the orthogonal matrix minimizing the reachable-set volume on the Stiefel manifold $\mathcal{V}_{n,n} = \{P \in \mathbb{R}^{n \times n} : P^{\top}P = I_n\}$:
\begin{equation}
(P^{\star}, J^{\star}) = \left(\arg\min_{P \in \mathcal{V}_{n,n}} J(P),\ \min_{P \in \mathcal{V}_{n,n}} J(P)\right),
\end{equation}
where $J(P) = \operatorname{vol}(\mathcal{X}_P^{(T)})$ is evaluated via Algorithm~\ref{alg:projection_propagation_refined}.


\subsection{Algorithms for Orthogonal Matrix  Optimization}

We present the data-driven forward projection-based propagation in Algorithm~\ref{alg:projection_propagation_refined} and two complementary approaches for solving the projection optimization problem: an exhaustive multi-scale coordinate descent~\cite{parra2021rotation, shalit2014coordinate} in Algorithm~\ref{alg:givens_multiscale} and an efficient Riemannian optimization method~\cite{gao2021riemannian} in Algorithm~\ref{alg:riemannian_opt}. In Algorithm~\ref{alg:projection_propagation_refined}, the operator $\otimes_*$ denotes either standard or constrained multiplication depending on whether $\mathcal{X}^{(t-1)}$ is a zonotope or a constrained zonotope.
\begin{algorithm}[h]
\caption{Projection-Based Propagation with Intersection}
\label{alg:projection_propagation_refined}
\begin{algorithmic}[1]
\Require Orthogonal matrix $P \in \mathbb{R}^{n \times n}$, matrix zonotope $\mathcal{M}_{\Sigma}$, initial set $\mathcal{X}_0$ , input zonotope $\mathcal{U}$, noise zonotope $\mathcal{Z}_w$, horizon $T$, order $\rho$, flag \texttt{use\_intersection}
\Ensure Reachable set $\mathcal{X}^{(T)}$ 
\State Initialize: $\mathcal{M}_P = P^{\top}\mathcal{M}_{\Sigma}P$, $\mathcal{X}_P^{(0)} = P^{\top}\mathcal{X}_0$
\For{$t = 1$ to $T$}
    \State $\mathcal{Z}_P^{(t)} = \mathcal{M}_P \otimes_c \mathcal{X}_P^{(t-1)} \oplus \mathcal{Z}_{w,P}$ \Comment{Use Prop.~\ref{prop:matzonotope_conzonotope}}
    
    \State $\tilde{\mathcal{X}}_P^{(t)} = \operatorname{GR}(\mathcal{Z}_P^{(t)}, \rho)$

    \If{\texttt{use\_intersection}}
        \State $\mathcal{CZ}^{(t)} = \mathcal{M}_{\Sigma} \otimes_* \mathcal{X}^{(t-1)} \oplus \mathcal{Z}_w$
        \State $\tilde{\mathcal{X}}^{(t)} = \operatorname{GR}(\mathcal{CZ}^{(t)}, \rho)$
        \State $\mathcal{X}^{(t)} = \tilde{\mathcal{X}}^{(t)} \cap (P \cdot \tilde{\mathcal{X}}_P^{(t)})$ \Comment{Constrained zonotope}
        \State $\mathcal{X}_P^{(t)} = P^{\top}\mathcal{X}^{(t)}$
    \Else
        \State $\mathcal{X}_P^{(t)} = \tilde{\mathcal{X}}_P^{(t)}$
    \EndIf
\EndFor
\State \Return $ \mathcal{X}^{(T)} = P \cdot \mathcal{X}_P^{(T)}$
\end{algorithmic}
\end{algorithm}


\subsubsection{Multi-Scale Givens Coordinate Descent}

The Givens coordinate-descent approach explores the orthogonal group through 2D rotations, updating only one coordinate pair at a time and preserving orthogonality by construction; see, e.g., \cite{shalit2014coordinate,parra2021rotation} for related schemes based on Givens rotations. Any $n \times n$ orthogonal matrix can be written as a product of Givens (plane) rotations~\cite{shalit2014coordinate}, so iterating over all coordinate pairs and all rotation angles on a discrete grid provides a complete parameterization of the orthogonal group. At each scale we scan all coordinate pairs and all angles on a discrete grid, accept a rotation only if it reduces the objective, and then refine the grid at the next (smaller) scale.

\begin{algorithm}[!t]
\caption{Multi-Scale Givens Coordinate Descent}
\label{alg:givens_multiscale}
\begin{algorithmic}[1]
\Require Initial projection $P_0 \in \mathcal{V}_{n,n}$; scale levels $s_j = \pi \cdot 2^{-j}$ for $j = 0, 1, \dots, J$; improvement threshold $\epsilon > 0$
\Ensure Optimized projection $P^{\star}$
\State $P = P_0$, $J_{\text{current}} = \operatorname{vol}(\mathcal{X}_P^{(T)})$ \label{line:givens_init}
\For{$j = 0$ to $J$} \label{line:scale_loop}
    \State $\text{converged} = \text{False}$
    \State Define angle grid $\Theta_j = \{-\pi, -\pi + s_j, \dots, \pi\}$ \label{line:grid_define}
    \While{not converged} \label{line:convergence_loop}
        \State $\text{improved} = \text{False}$
        \For{all pairs $(i_1,i_2)$ with $1 \le i_1 < i_2 \le n$} \label{line:pair_loop}
            \State $\theta^{\star} = 0$, $J_{\text{best}} = J_{\text{current}}$
            \For{each $\theta \in \Theta_j$} \label{line:angle_search}
                \State Construct Givens rotation $G(i_1,i_2,\theta)$:
                \State \quad $G = I_n + (\cos\theta - 1)(e_{i_1}e_{i_1}^\top + e_{i_2}e_{i_2}^\top)
                + \sin\theta (e_{i_1}e_{i_2}^\top - e_{i_2}e_{i_1}^\top)$ \label{line:givens_construct}
                \State $P_{\text{trial}} = G P$ \label{line:trial_projection}
                \State $J_{\text{trial}} = \operatorname{vol}(\mathcal{X}_{P_{\text{trial}}}^{(T)})$ \Comment{via Alg.~\ref{alg:projection_propagation_refined}} \label{line:volume_eval}
                \If{$J_{\text{trial}} < J_{\text{best}}$} \label{line:update_best}
                    \State $J_{\text{best}} = J_{\text{trial}}$, $\theta^{\star} = \theta$
                \EndIf
            \EndFor
            \If{$\bigl(J_{\text{current}} - J_{\text{best}}\bigr)/J_{\text{current}} > \epsilon$} \label{line:improvement_check}
                \State Rebuild $G^\star = G(i_1,i_2,\theta^{\star})$
                \State $P = G^\star P$
                \State $J_{\text{current}} = J_{\text{best}}$
                \State $\text{improved} = \text{True}$
            \EndIf
        \EndFor
        \State $\text{converged} = \neg \text{improved}$ \label{line:convergence_check}
    \EndWhile
\EndFor
\State \Return $P$
\end{algorithmic}
\end{algorithm}

\begin{algorithm}[!t]
\caption{Multi-Scale Givens Descent with Orthogonal Symmetry}
\label{alg:givens_multiscale_full}
\begin{algorithmic}[1]
\Require Initial projection $P_0 \in \mathcal{V}_{n,n}$; scale levels $\{s_j\}_{j=0}^J$; threshold $\epsilon > 0$
\State $P^{(1)} \gets$ Alg.~\ref{alg:givens_multiscale}$(P_0, \{s_j\}, \epsilon)$
\State $P^{(2)} \gets$ Alg.~\ref{alg:givens_multiscale}$(-P_0, \{s_j\}, \epsilon)$
\State $J^{(1)} \gets \operatorname{vol}(\mathcal{X}_{P^{(1)}}^{(T)})$, $J^{(2)} \gets \operatorname{vol}(\mathcal{X}_{P^{(2)}}^{(T)})$
\If{$J^{(1)} \le J^{(2)}$}
    \State \Return $P^{(1)}$
\Else
    \State \Return $P^{(2)}$
\EndIf
\end{algorithmic}
\end{algorithm}

Algorithm~\ref{alg:givens_multiscale} implements a coarse-to-fine search over decreasing angle steps $s_j = \pi 2^{-j}$ (line~\ref{line:scale_loop}). At each scale it sweeps all $\binom{n}{2}$ coordinate pairs (line~\ref{line:pair_loop}), testing Givens rotations on the grid $\Theta_j$ (line~\ref{line:angle_search}) and accepting updates that exceed the threshold~$\epsilon$ (line~\ref{line:improvement_check}). The cost is $O(J\, n^2\, |\Theta_j|\, C_{\text{vol}})$, where $C_{\text{vol}}$ (volume evaluation via reachability) dominates.

\begin{remark}[Note on orthogonality.]
Algorithm~\ref{alg:givens_multiscale} uses only Givens rotations, so all iterates remain in the special orthogonal group $SO(n)$ (i.e., orthogonal matrices with $\det=1$) \cite{shalit2014coordinate, frerix2019approximating}. The full orthogonal group $O(n)$ has two connected components, distinguished by $\det=+1$ and $\det=-1$, and the reduction may benefit from either component. To cover both, the Givens search is run twice: once from $P_0$ and once from a reflected seed $S P_0$ with $\det S = -1$ (e.g., $S=\mathrm{diag}(-1,1,\dots,1)$), retaining the candidate with the smaller volume. This incurs only a factor of~2 in cost while exploring all orthogonal symmetries relevant to the reduction.

\end{remark}

\subsubsection{Riemannian Optimization on the Stiefel Manifold}

To address the computational limitations of exhaustive search, we employ Riemannian optimization that directly exploits the geometric structure of the Stiefel manifold~\cite{gao2021riemannian,tagare2011notes}. This approach performs efficient gradient-based optimization while preserving orthogonality through intrinsic geometric operations.

\begin{algorithm}[!t]
\caption{Riemannian Optimization for Projection Matrix}
\label{alg:riemannian_opt}
\begin{algorithmic}[1]
\Require Matrix zonotope $\mathcal{M}_{\Sigma}$, initial projections $\{P_1, \ldots, P_R\} \subset \mathcal{V}_{n,n}$, convergence tolerance $\varepsilon > 0$, maximum iterations $I_{\max} \in \mathbb{N}$, maximum trust-region radius $\Delta_{\max} > 0$
\Ensure Projection $P^{\star} \in \mathcal{V}_{n,n}$ minimizing volume, and $J^{\star}$
\State Define manifold $\mathcal{M} := \mathcal{V}_{n,n} = \{P \in \mathbb{R}^{n \times n} : P^{\top}P = I_n\}$ \label{line:manifold_def}
\State Define objective $f(P) := \log(\operatorname{vol}(\mathcal{X}_P^{(T)}))$ \Comment{$\mathcal{X}_P^{(T)}$ from Alg.~\ref{alg:projection_propagation_refined}} \label{line:objective_def}
\State $J^{\star} = +\infty$, $P^{\star} = P_1$ \label{line:riem_init}
\For{$i = 1$ to $R$} \label{line:multistart_loop}
    \State $P^{(0)} = P_i$, $\Delta^{(0)} = 1$, $\ell = 0$
    \While{$\ell < I_{\max}$} \label{line:riem_main_loop}
        \State Compute Riemannian gradient $\mathrm{grad}_f(P^{(\ell)}) \in T_{P^{(\ell)}}\mathcal{M}$ \label{line:riem_gradient}
        \If{$\|\mathrm{grad}_f(P^{(\ell)})\| < \varepsilon$} \label{line:convergence_test}
            \State \textbf{break}
        \EndIf
        \State Solve trust-region subproblem on $T_{P^{(\ell)}}\mathcal{M}$ to get $\xi^{(\ell)}$ \label{line:trust_subproblem}
        \State $P_{\text{trial}} = \mathcal{R}_{P^{(\ell)}}(\xi^{(\ell)})$ \label{line:retraction}
        \State Evaluate ratio $\rho^{(\ell)}$ \label{line:ratio_eval}
        \If{$\rho^{(\ell)} > 0.25$}
            \State $P^{(\ell+1)} = P_{\text{trial}}$
            \If{$\rho^{(\ell)} > 0.75$}
                \State $\Delta^{(\ell+1)} = \min(2\Delta^{(\ell)}, \Delta_{\max})$
            \EndIf
        \Else
            \State $P^{(\ell+1)} = P^{(\ell)}$
            \State $\Delta^{(\ell+1)} = 0.25 \Delta^{(\ell)}$ \label{line:shrink_trust}
        \EndIf
        \State $\ell = \ell + 1$
    \EndWhile
    \State $P_{\text{local}} = P^{(\ell)}$, $J_{\text{local}} = \exp(f(P_{\text{local}}))$ \label{line:extract_local}
    \If{$J_{\text{local}} < J^{\star}$} \label{line:update_global}
        \State $P^{\star} = P_{\text{local}}$, $J^{\star} = J_{\text{local}}$
    \EndIf
\EndFor
\State \Return $(P^{\star}, J^{\star})$
\end{algorithmic}
\end{algorithm}

Algorithm~\ref{alg:riemannian_opt} optimizes $f(P) = \log(\operatorname{vol}(\mathcal{X}_P^{(T)}))$ directly on the Stiefel manifold via trust-region steps, so all iterates remain orthogonal. A multi-start procedure (line~\ref{line:multistart_loop}) from diverse initializations (data-based, PCA-based, random QR) mitigates local minima. Each run computes the Riemannian gradient by projecting the Euclidean gradient onto $T_P\mathcal{V}_{n,n}$ (line~\ref{line:riem_gradient}), solves a trust-region subproblem (line~\ref{line:trust_subproblem}), and retracts onto the manifold via $\mathcal{R}_P(\xi) = (P + \xi)(I + \xi^\top \xi)^{-1/2}$ (line~\ref{line:retraction}). The best solution over all starts is returned (line~\ref{line:update_global}).

\begin{remark}[Initialization strategy]
The quality of the multi-start initializations significantly affects convergence speed and solution quality. In practice, including the heuristic projections developed in Section~\ref{p_matrix_selection} (e.g., $\mathcal{L}_1$-SVD, maximal rotation) alongside random orthogonal matrices ensures that at least one starting point lies near a good local minimum, typically reducing the number of trust-region iterations required.
\end{remark}

\section{Special Orthogonal  Matrices}\label{p_matrix_selection}

The rotation-based order reduction described above can, in principle, find good projections, but it is computationally demanding. To reduce this cost, we exploit the structure of data-driven matrix zonotopes together with spectral perturbation theory to construct effective orthogonal matrices efficiently.
Before presenting the algorithms, we establish the fundamental connection between noise in data-driven reachability analysis and the resulting matrix zonotope structure.



\begin{algorithm}[!t]
\caption{Unified Orthogonal Matrix Selection Framework}\label{alg:unified_projection}
\begin{algorithmic}[1]
\Require Matrix zonotope $\mathcal{M}_{\Sigma}=\langle C,\{G^{(1)},\dots,G^{(\gamma)}\}\rangle$, dimension $n$
\Ensure $P\in\mathbb{R}^{n\times n}$ with $P^\top P=I_n$
\If{method = $\mathcal{L}_1$-SVD} 
    \State $P=$ Algorithm~\ref{alg:svd_pca}($\mathcal{M}_{\Sigma}$, $n$)
\ElsIf{method = Max-Rotation} 
    \State $P=$ Algorithm~\ref{alg:max_rotation_pca}($\mathcal{M}_{\Sigma}$, $n$)
\EndIf
\State \Return $P$
\end{algorithmic}
\end{algorithm}

\begin{theorem}[Noise to Matrix Zonotope Factorization]
\label{thm:noise_to_matZono}
Consider the data equation
\begin{equation}\label{eq:model}
X_+ \;=\; AX_- + BU_- + W_- \;=\; \Sigma\, M + W_-,
\end{equation}
where $\Sigma=[A\ B]\in\mathbb{R}^{n\times(n+m)}$ and $M=\begin{bmatrix}X_-\\ U_-\end{bmatrix}\in\mathbb{R}^{(n+m)\times T}$.
Assume the following:
\begin{enumerate}
\item For each $j\in\{1,\dots,T\}$, the $j$-th noise column admits
\begin{equation}\label{eq:wj}
w_j \;=\; \sum_{i=1}^{p}\xi_{i,j}\,g_w^{(i)},\qquad \xi_{i,j}\in[-1,1]
\end{equation}
with the coefficients $\{\xi_{i,j}\}$ free and independent across $(i,j)$.
\item $M$ has full row rank $n{+}m$. Fix any right inverse
\begin{equation}\label{eq:rightinv}
H\in\mathbb{R}^{T\times(n+m)}\quad\text{s.t.}\quad M H = I_{n+m}.
\end{equation}
\end{enumerate}
Then any $\Sigma$ consistent with \eqref{eq:model} must satisfy
\begin{equation}\label{eq:sigma_affine}
\Sigma \;=\; X_+H - W_-H,
\end{equation}
and hence belongs to the matrix zonotope
\begin{equation}\label{eq:matzono}
\Sigma \;\in\; \Big\langle\, C,\ \{G^{(i,j)}\}_{\substack{i=1,\ldots,p \\ j=1,\ldots,T}} \,\Big\rangle,
\end{equation}
with center $C=X_+H$ and rank-one generators
\begin{equation}\label{eq:gen_rankone}
G^{(i,j)}=-\,g_w^{(i)}\,h_j^{\top},
\end{equation}
where $h_j^{\top}$ is the $j$-th row of $H$.

\end{theorem}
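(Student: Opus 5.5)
The plan is to argue directly from the data equation \eqref{eq:model} by right-multiplying with the fixed right inverse $H$, and then to read off the matrix-zonotope structure term by term from the noise parameterization \eqref{eq:wj}. No perturbation theory or earlier set-operation results are needed; the statement is purely algebraic. The only earlier facts used are Definition~\ref{def:matrix_zonotope} and the identity $MH=I_{n+m}$ from \eqref{eq:rightinv}.

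\emph{Step 1 (affine characterization).} Suppose $\Sigma$ is consistent with \eqref{eq:model}. Then there is a noise realization $W_-=[w_1\ \cdots\ w_T]$, each $w_j$ of the form \eqref{eq:wj}, such that $X_+=\Sigma M+W_-$. Multiplying on the right by $H$ and using $MH=I_{n+m}$ gives $X_+H=\Sigma+W_-H$, which is exactly \eqref{eq:sigma_affine}. I will note explicitly that this holds for \emph{any} right inverse, because only $MH=I$ is used.

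\emph{Step 2 (expanding the noise term).} Write $W_-H=\sum_{j=1}^T w_j h_j^{\top}$, since column $j$ of $W_-$ multiplies row $j$ of $H$. Substituting \eqref{eq:wj} gives
\[
W_-H=\sum_{j=1}^{T}\sum_{i=1}^{p}\xi_{i,j}\,g_w^{(i)}h_j^{\top}.
\]
Hence
\[
\Sigma = X_+H+\sum_{i,j}\xi_{i,j}\bigl(-g_w^{(i)}h_j^{\top}\bigr),
\]
with every $\xi_{i,j}\in[-1,1]$. This is precisely an element of $\langle C,\{G^{(i,j)}\}\rangle$ in the sense of Definition~\ref{def:matrix_zonotope}, with $C=X_+H$ and $\beta_{i,j}=\xi_{i,j}$. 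Each generator $G^{(i,j)}=-g_w^{(i)}h_j^{\top}$ is an outer product, so it has rank at most one, which settles \eqref{eq:gen_rankone}.

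\emph{Main obstacle.} The only delicate point is justifying that the coefficient vector $(\xi_{i,j})$ arising from an arbitrary admissible noise realization is an admissible $\beta$-vector in the matrix zonotope, that is, a point of the full box $[-1,1]^{pT}$. This is where the assumption that the $\xi_{i,j}$ are free and independent across $(i,j)$ is used. Strictly, containment needs only that each coefficient lies in $[-1,1]$, so independence matters only if one wants the zonotope to equal the set of all $W_-$-induced matrices. I will point out that the statement claims only containment. Finally, I will remark that the degenerate case in which a generator vanishes, because $g_w^{(i)}=0$ or $h_j=0$, is harmless, since a zero generator does not change the set.
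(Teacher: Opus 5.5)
Your proposal is correct and follows essentially the same route as the paper: right-multiply the data equation by $H$, use $MH=I_{n+m}$, expand $W_-H=\sum_j w_j h_j^{\top}$, and substitute the noise parameterization to read off the center $X_+H$ and the rank-one generators $-g_w^{(i)}h_j^{\top}$. Your side remark is also accurate: containment needs only $\xi_{i,j}\in[-1,1]$, and independence of the coefficients matters only if one wants equality with the full zonotope.
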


\begin{proof}
Right-multiplying \eqref{eq:model} by $H$ and using \eqref{eq:rightinv} gives \eqref{eq:sigma_affine}. 
\begin{gather}
W_-=\sum_{j=1}^T w_j e_j^{\top},\quad
w_j=\sum_{i=1}^{p}\xi_{i,j}g_w^{(i)}, \label{eq:WH}\\
W_-H=\sum_{j=1}^{T}\sum_{i=1}^{p}\xi_{i,j}\, g_w^{(i)} h_j^{\top}, \notag
\end{gather}
and substituting into \eqref{eq:sigma_affine} yields \eqref{eq:matzono}.
\end{proof}

\begin{remark}[Representation and rank-one structure]\label{rmk:rightfactor}
Each generator in \eqref{eq:matzono} has rank at most $1$:
$G^{(i,j)}=-g_w^{(i)}h_j^{\top}$. 

\end{remark}

Theorem~\ref{thm:noise_to_matZono} reveals that all generators share the common right factor~$H$. We now present two algorithms that exploit this structure.




\subsection{$\mathcal{L}_1$-SVD Projection}

Algorithm~\ref{alg:svd_pca} aggregates generators using $\ell_1$ norms to capture the overall uncertainty spread, then extracts principal directions via eigendecomposition.

\begin{figure*}[!t]
    \centering
\includegraphics[width=0.95\textwidth]{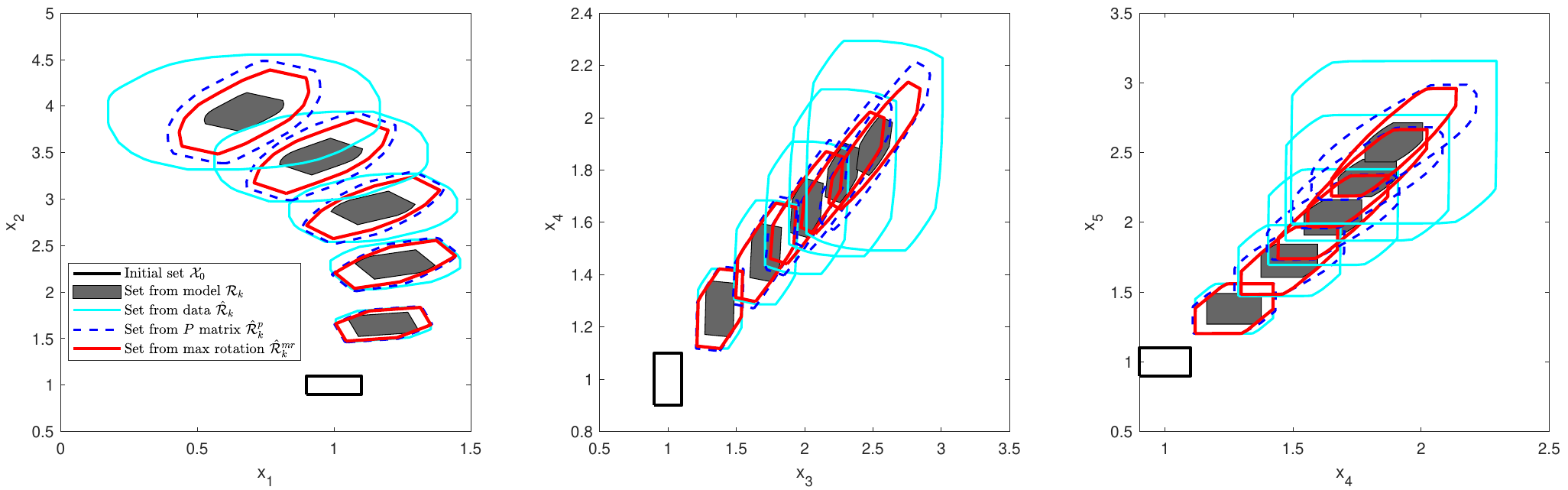}
    \caption{2D projections in 5D state space: ground truth $\mathcal{R}_k$ (gray regions), initial set $\mathcal{X}_0$ (black box), data-driven reachable set $\hat{\mathcal{R}}_k$ (cyan line), set obtained with matrix $P$ (blue dashed line), and set obtained with the maximal-rotation method $\hat{\mathcal{R}}^{mr}_k$ (red solid line). The maximal-rotation method achieves a significantly tighter outer approximation.} \label{fig:matrix-zonotope-projections}
\end{figure*}

\subsection{Maximal Rotation Projection}
We now exploit Theorem~\ref{thm:noise_to_matZono} to construct an orthogonal matrix whose principal directions are most sensitive to admissible matrix-zonotope perturbations. Let
\begin{gather}
S_0 = \tfrac{1}{n+m}\, C C^{\top}= U\Lambda U^{\top}, \label{eq:S0}\\
\Lambda=\mathrm{diag}(\lambda_1,\ldots,\lambda_n),\quad \lambda_1\ge\cdots\ge\lambda_n, \notag
\end{gather}
and fix a target eigen-index $i$. Denote the nearest spectral competitor:
\begin{equation}\label{eq:gap}
j^{\star}=\arg\min_{j\neq i}|\lambda_i-\lambda_j|,\qquad
\delta=|\lambda_i-\lambda_{j^{\star}}|>0.
\end{equation}
Consider admissible perturbations (coefficients bounded by one)
\begin{equation}\label{eq:Ebeta}
E(\beta)=-\sum_{r=1}^{p}\sum_{t=1}^{T}\beta_{r,t}\, g_w^{(r)} h_t^{\top},\quad
|\beta_{r,t}|\le 1,
\end{equation}
and the perturbed covariance
\begin{equation}\label{eq:Sbeta}
S(\beta)=\frac{(C+E(\beta))(C+E(\beta))^{\top}}{n+m},\quad
\Delta S(\beta)=S(\beta)-S_0.
\end{equation}
By the Davis--Kahan $\sin\Theta$ theorem~\cite{bhardwaj2024matrix},
\begin{equation}\label{eq:DK}
\sin\theta_i(\beta)\ \le\ \frac{\|\Delta S(\beta)\|_2}{\delta}.
\end{equation}
Using the rank-one structure in \eqref{eq:Ebeta},
\begin{equation}\label{eq:DeltaS_bound}
\|\Delta S(\beta)\|_2 \ \le\ \frac{2}{n+m}\sum_{r,t} |\beta_{r,t}|\,\|C h_t\|_2\,\|g_w^{(r)}\|_2
+\frac{\|E(\beta)\|_2^2}{n+m},
\end{equation}
where
\begin{equation}\label{eq:E_norm_bound}
\|E(\beta)\|_2 \ \le\ \sum_{r,t} |\beta_{r,t}|\,\|g_w^{(r)}\|_2\,\|h_t\|_2.
\end{equation}
Inequalities \eqref{eq:DK}–\eqref{eq:E_norm_bound} yield a rigorous a priori bound on the maximal rotation of $u_i$ under admissible matrix-zonotope perturbations. For small-amplitude noise, one may drop the quadratic term in \eqref{eq:DeltaS_bound}.

\begin{algorithm}[!t]
\caption{$\mathcal{L}_1$-SVD Projection}\label{alg:svd_pca}
\begin{algorithmic}[1]
\Require Matrix zonotope $\mathcal{M}_{\Sigma} = \langle C, \{G^{(1)}, \ldots, G^{(\gamma)}\} \rangle$, dimension $n$
\Ensure Orthogonal projection $P \in \mathbb{R}^{n \times n}$
\State Initialize $G \in \mathbb{R}^{n \times \gamma}$
\For{$i = 1$ to $\gamma$}
    \State $G_{:,i} = \sum_{j=1}^{n+m} |G^{(i)}_{:,j}|$ \Comment{Row-wise $\ell_1$ aggregation}
\EndFor
\State $X = [G, -G]^{\top} \in \mathbb{R}^{2\gamma \times n}$
\State $S = \frac{1}{2\gamma}X^{\top}X$; $S = V\Lambda V^{\top}$ 
\State \Return $P = V$ sorted by eigenvalues
\end{algorithmic}
\end{algorithm}

\begin{algorithm}[t]
\caption{Maximal-Rotation Projection (zonotope input)}
\label{alg:max_rotation_pca}
\begin{algorithmic}[1]
\Require Matrix zonotope $\mathcal{M}_{\Sigma} = \langle C, \{G^{(1)}, \ldots, G^{(\gamma)}\} \rangle$, dimension $n$
\Ensure $P\in\mathbb{R}^{n\times n}$ orthogonal ($P^{\top}P=I_n$)
\State $d=$ number of columns of $C$;
\State $S_0 = \frac{1}{d}CC^{\top}$; $S_0=U\Lambda U^{\top}$
\State Set target eigen-index $i= 1$ (dominant); \State $j^{\star}= \arg\min_{j\neq i}|\lambda_i-\lambda_j|$;  $\delta= |\lambda_i-\lambda_{j^{\star}}|$
\State $u_i= U(:,i)$; \ $u_{j^{\star}}= U(:,j^{\star})$ \label{ui_uj}
\For{$k=1$ to $\gamma$} \Comment{first-order coupling for generator $k$}
  \State $c_k = (u_{j^{\star}}^{\top} C G^{(k)\top} u_i)\;+\; (u_{j^{\star}}^{\top} G^{(k)} C^{\top} u_i)$
\EndFor
\State $\beta_k^{\star}= \mathrm{sign}(c_k)$ for all $k$ \ (with $\mathrm{sign}(0)=0$)
\State $E^{\star}= -\sum_{k=1}^{\gamma}\beta_k^{\star}\,G^{(k)}$;$C^{\star}= C+E^{\star}$
\State $S^{\star}= \frac{1}{d}C^{\star}C^{\star\top}$;$S^{\star}=V\Sigma_{\star}V^{\top}$;$P= V$
\State \Return $P$
\end{algorithmic}
\end{algorithm}

Algorithm~\ref{alg:max_rotation_pca} linearizes $\Delta S(\beta)$ from~\eqref{eq:Sbeta} around~$C$ and projects the variation onto the two most competitive eigendirections $(u_i,u_{j^{\star}})$ (line~\ref{ui_uj}), yielding a first-order coupling score per generator. Maximizing this coupling selects a vertex of the zonotope that approximately maximizes the rotation of~$u_i$; the output basis is then obtained from the eigendecomposition of $S^{\star}=\tfrac{1}{d}C^{\star}C^{\star\top}$, where $d$ is the number of columns of~$C$.

Figure~\ref{fig:matrix-zonotope-projections} illustrates the effect: starting from a random $5\times 5$ orthogonal~$P$, Algorithm~\ref{alg:max_rotation_pca} produces an improved matrix~$P_{mr}$ whose reachable set $\hat{\mathcal{R}}^{mr}_k$ is visibly tighter than the baseline~$\hat{\mathcal{R}}_k$ at negligible extra cost.

The effectiveness of these methods follows from the rank-1 structure of generators in Theorem~\ref{thm:noise_to_matZono}.

\begin{remark}[Degenerate cases]
\label{rem:degenerate}
When the eigenvalue spectrum of $S = G_{\text{disc}}G_{\text{disc}}^{\top}$ is flat ($\lambda_1(S)/\lambda_n(S)\approx 1$), the bound~\eqref{eq:box_vol_bound} is nearly the same for every orthogonal~$P$, and the gain over the baseline is modest. This situation arises when the noise affects all state dimensions equally, leaving no dominant direction to align with. Conversely, when the spectrum is well-separated, the projection can align the coordinate axes with the low-variance subspace, concentrating the reduced generators where they matter most. In all cases the outer-approximation guarantee is preserved, so applying the method is never harmful.
\end{remark}



\section{Numerical Examples}\label{sec:numerical}

\begin{figure*}[h]
\centering
\includegraphics[width=0.95\textwidth]{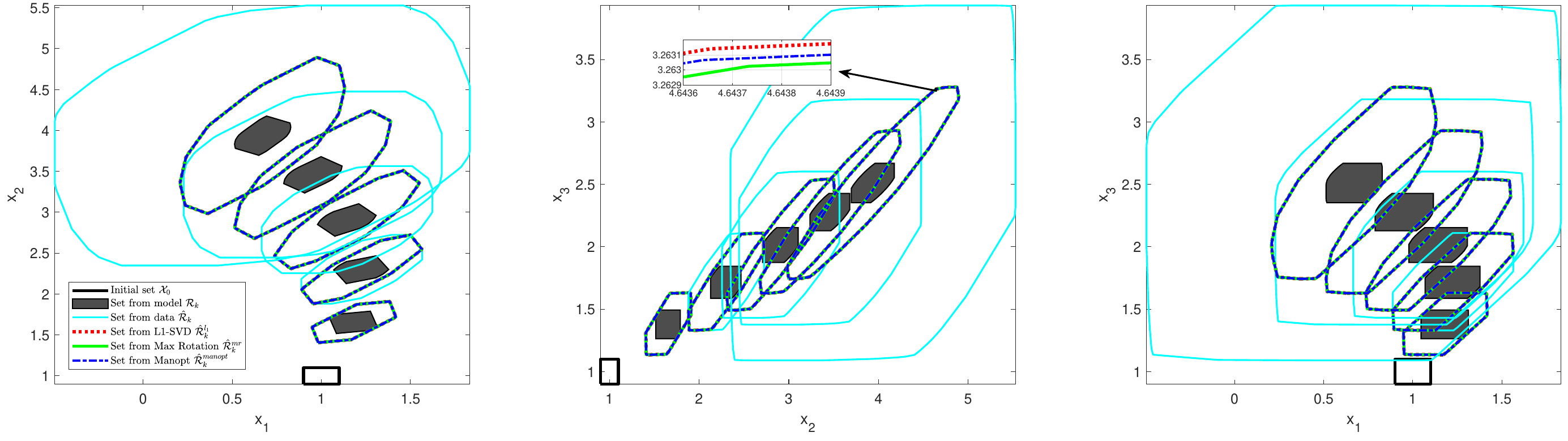}
\caption{Reachable set approximations obtained using different projection methods under process noise. Ground truth $\mathcal{R}_k$ (gray regions), initial set $\mathcal{X}_0$ (black box), data-driven reachable set $\hat{\mathcal{R}}_k$ (cyan line). The inset highlights the minimal differences between $\mathcal{L}_1$-SVD $\hat{\mathcal{R}}^{\mathcal{L}_1}$ (red dashed line), maximal rotation $\hat{\mathcal{R}}^{mr}$ (green solid line), and the Manopt method $\hat{\mathcal{R}}^{manopt}$ (blue dash-dot line).}
\label{fig:all_methods_comparison}
\end{figure*}

This section demonstrates the effectiveness of the proposed projection matrix methods through numerical simulations.

Consider a five-dimensional discrete-time linear system with initial state set $\mathcal{X}_0 = \langle \mathbf{1}_5, 0.1 I_5 \rangle$, scalar control input bounded by $\mathcal{U} = \langle 10, 0.25 \rangle$, and process noise $\mathcal{W} = \langle \mathbf{0}_5, 0.01\times\text{diag}([1, 1.1, 1.3, 1.2, 1.5]) \rangle$. The system matrices are:

\begin{equation*}
A = \begin{bmatrix}
0.9323 & -0.189 & 0 & 0 & 0 \\
0.189 & 0.9323 & 0 & 0 & 0 \\
0 & 0 & 0.8596 & 0.04302 & 0 \\
0 & 0 & -0.04302 & 0.8596 & 0 \\
0 & 0 & 0 & 0 & 0.9048
\end{bmatrix},
\end{equation*}
\begin{equation*}
B^T = \begin{bmatrix}
0.04363 ,0.05327, 0.04754, 0.04528,0.04758
\end{bmatrix}.
\end{equation*}

Figure~\ref{fig:all_methods_comparison} visualizes the reachable-set approximations under different projections; the inset shows that the advanced methods are nearly indistinguishable at plotting resolution. Table~\ref{tab:projection_performance} reports runtimes and projected volumes (computed in CORA~\cite{althoff2015cora}). The two reachable sets
$\hat{\mathcal{R}}^{\mathcal{L}_1}_k$ and $\hat{\mathcal{R}}^{mr}_k$,
obtained by $\mathcal{L}_1$--SVD and maximal rotation respectively, achieve virtually identical accuracy (volume ratios $\approx 7.33$--$7.36$) while being orders of magnitude faster than the Riemannian optimizer (Manopt). In this instance, $\mathcal{L}_1$--SVD is slightly faster (${\approx}\,5.17$\,ms vs.\ $12.51$\,ms), though the difference is minor.

\begin{table}[h]
  \centering
  \caption{Performance comparison of projection methods}
\label{tab:projection_performance}
  \small
  \begin{tabular}{@{}lrrr@{}}
    \toprule
    \textbf{Method} & \textbf{Time (s)} & \textbf{Vol.\footnotemark} & \textbf{Ratio} \\
    \midrule
    Model (truth) & N/A & $1.68{\times}10^{-2}$ & $1.00$ \\
    \midrule
    Max.\ Rotation & $1.25{\times}10^{-2}$ & $1.23{\times}10^{-1}$ & $7.33$ \\
    L1-SVD & $5.17{\times}10^{-3}$ & $1.23{\times}10^{-1}$ & $7.34$ \\
    Manopt & $113.2$ & $1.23{\times}10^{-1}$ & $7.36$ \\
    \bottomrule
  \end{tabular}
\end{table}
\footnotetext{%
Exact 5D volume evaluation at $t{=}5$ requires
$>$9000\,GB memory; we report 3D projected volumes.
The 5D values scale approximately as
$V_{\text{5D}}\!\approx\! V_{\text{3D}}^{5/3}$.}


Our system is linear time-invariant (LTI), so the projection matrix $P$ is computed once offline before the reachability analysis. Extending the same ``orthogonally align then reduce'' idea to linear time-varying (LTV) systems would require updating $P$ at each step, since the uncertainty structure changes over time. This is a nontrivial extension, particularly in the data-driven setting. Both $\mathcal{L}_1$--SVD and maximal rotation can recompute $P$ in milliseconds per step if needed, which makes them practical candidates for such future extensions.

We additionally consider a piecewise affine system with a switching surface at $x_1 = 0$ subject to bounded noise~\cite{xie2025data}.
As shown in Figure~\ref{fig:hybrid-nmz-mz}, the sets $\hat{\mathcal{R}}^{\mathrm{mr}}_k$ (red) obtained by the
maximal-rotation method yield a tighter outer approximation than the sets $\hat{\mathcal{R}}_k$
(black solid line) obtained directly from data, while remaining close to the model-based reachable sets
$\mathcal{R}_k$ (blue). Hybrid systems place stricter demands on the tightness of reachable sets:
errors near the switching boundary (at $x_1 = 0$ in Figure~\ref{fig:hybrid-nmz-mz}) can lead to exponentially accumulated propagation errors that severely degrade subsequent reachable-set computations~\cite{xie2025data}.

\begin{figure}
\centering
\includegraphics[width=0.95\columnwidth]{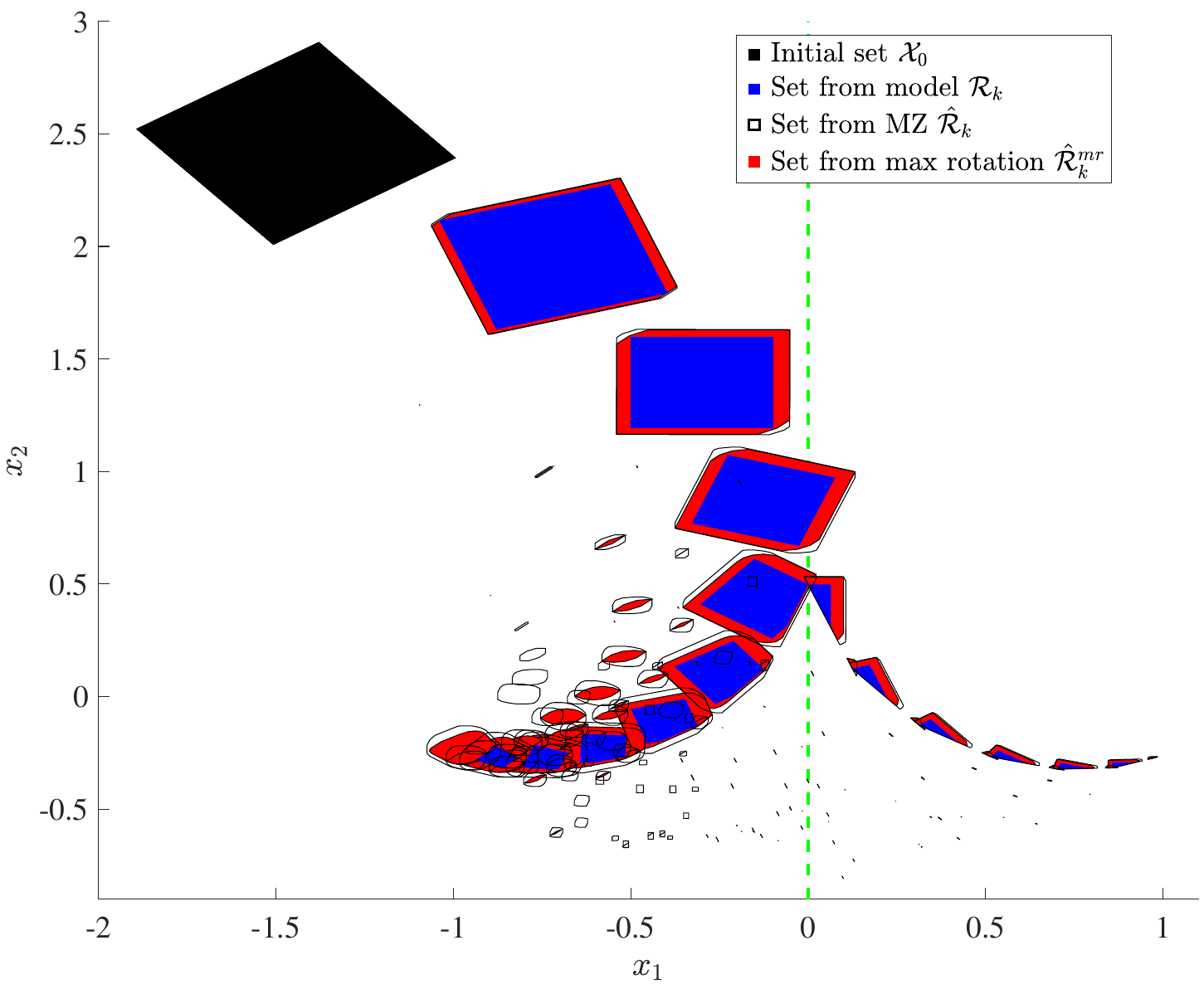}
\caption{\textbf{Piecewise affine system (switching at $x_1=0$).} Initial set $\mathcal{X}_0$ (black); sets from model $\mathcal{R}_k$ (blue); sets from matrix zonotope (MZ) $\hat{\mathcal{R}}_k$ (black solid line); sets from maximal rotation $\hat{\mathcal{R}}^{\mathrm{mr}}_k$ (red) which yield a tighter outer approximation.}
\label{fig:hybrid-nmz-mz}
\end{figure}

\section{Conclusion}\label{conclu}
This paper addressed the coordinate-dependent conservatism introduced by fixed-order generator reduction in data-driven reachability analysis. By applying a suitably chosen orthogonal transformation before Girard reduction, the proposed method produces outer approximations with substantially smaller volume while maintaining a comparable generator count. The rank-one structure of data-driven matrix zonotope generators was exploited to construct effective orthogonal matrices at low computational cost, demonstrating that structured uncertainty can be \emph{aligned} rather than merely over-approximated.

Several directions merit further investigation. First, sharper perturbation bounds such as the Cai--Zhang theorem may yield tighter guarantees for subspace alignment. Second, extending the framework to constrained matrix zonotopes and nonlinear systems would broaden its applicability to hybrid and safety-critical settings. Third, reducing the effective number of generators~$\gamma_M$ through low-rank noise approximation could mitigate the growth factor in~\eqref{eq:gen-growth-exact}. Finally, the interplay between orthogonal transformations and discrete transitions in hybrid systems presents opportunities for further conservatism reduction.

\balance
\bibliography{BibTex_2025}
\bibliographystyle{ACM-Reference-Format}

\end{document}